\documentclass[12pt,a4paper]{article}

\usepackage{amsfonts,amsmath,amssymb,amsthm}
\usepackage{latexsym,amscd}
\usepackage{amsbsy}
\usepackage{CJK}
\usepackage{indentfirst,mathrsfs}
\usepackage{amsfonts,amsmath,amssymb,amsthm}
\usepackage{latexsym,amscd}
\usepackage{amsbsy}
\usepackage{color}
\usepackage{multirow}
\usepackage{makecell}
\usepackage{float}
\usepackage{cases}
\usepackage{booktabs}
\usepackage{xcolor}
\usepackage{cite}
\usepackage{longtable}

\usepackage{amsmath,amssymb}
\usepackage{booktabs}
\usepackage{array}
\usepackage{makecell}
\usepackage{multirow}
\usepackage{float}

\allowdisplaybreaks

\numberwithin{equation}{section}

\newtheorem{thm}{Theorem}
\newtheorem{lem}{Lemma}
\newtheorem{cor}{Corollary}
\newtheorem{prop}{Proposition}

\newtheorem{example}{Example}
\newtheorem{remark}{Remark}

\newcommand{\Tr}{{\rm {Tr}}}
\newcommand{\F}{\mathbb{F}}

\begin{document}

\title{A class of Niho type four-valued cross-correlation functions between sequences of different periods}
	\author{Yao Yao \thanks{Y. Yao, Y. Wang, and X. Zeng are with Hubei Key Laboratory of Applied Mathematics, Faculty of Mathematics and Statistics, Hubei
		University, Wuhan, 430062, China. Email: yyao1966@163.com (Y. Yao), yufeiwang11@163.com (Y. Wang), xiangyongzeng@aliyun.com (X. Zeng)}, Yufei Wang,
Yongbo Xia\thanks{Y. Xia is with  the Department of Mathematics and Statistics, South-Central Minzu University,
Wuhan 430074, China. Email: xia@mail.scuec.edu.cn},
Nian Li\thanks{N. Li is with Hubei Provincial Engineering Research Center of Intelligent Connected Vehicle Network Security, School of Cyber Science and Technology, Hubei
	University, Wuhan, 430062, China. Email: nian.li@hubu.edu.cn},
Xiangyong Zeng}
	\date{}
	\maketitle
\date{}
\maketitle

\begin{quote}
{\small {\bf Abstract:}
Determining the cross-correlation distributions between $m$-sequences and their decimated sequences is an important problem in sequence design. For two positive integers $n$ and $k$ with $n=2k$, this paper investigates the cross-correlation function between a binary $m$-sequence $\{u(t)\}$ of period $2^{n}-1$ and its $2^{\gcd(s,k)}+1$ short-period decimated sequences $\{u(dt+l)\}$, where the decimation $d=\frac{2^s}{2^s+1}(2^k-1)+1$ satisfies $1<s<n$ and $v_2(s)>v_2(k)$. The cross-correlation function is proved to be four-valued and its value distribution is completely determined. Several previously known binary cross-correlation functions are included in our result as special cases.
}

{\small {\bf Index Terms:} Low correlation sequence, $m$-sequence, cross-correlation function, correlation distribution.}
\end{quote}

\section{Introduction} \label{intro}

Sequences with low correlation are widely used in spread spectrum communications, code-division multiple access (CDMA), radar, and stream cipher systems \cite{Helle2, GolombGong1}.
As the most well-known sequences possessing ideal out-of-phase autocorrelation, the maximal-length linear-feedback shift-register sequences ($m$-sequences) have been extensively studied \cite{Helle2,GolombGong1}. This property motivates analyzing the cross-correlation between two different $m$-sequences related by $d$-decimation, where $d$ is coprime to their common period. Since the cross-correlation function of any two $m$-sequences that are not cyclic shifts of
each other assumes  at least three
values \cite{Helle1}, considerable research  has focused on characterizing decimations that yield exactly three
cross-correlation values (see \cite{CD, Gold1,Kasami,Helle1,DobbWelch1,Canteaut1,Hollmann1} and references therein).

Another important problem in the theory of $m$-sequences is to investigate the four-valued cross-correlation functions between two different $m$-sequences of the same period. Niho studied two four-valued cases, namely the decimations $d=2^{\frac{n}{2}+1}-1$ and $(2^{\frac{n}{2}}+1)(2^{\frac{n}{4}}-1)+2$ with  $n\equiv0\pmod4$ \cite[Theorems 3.6 and 3.7]{Niho1}.
Dobbertin considered the decimation $d=\sum_{i=0}^{\frac{n}{2}}2^{it}$,
with $n\equiv0\pmod4$, $0<t<n$, and $\gcd(t,n)=1$ \cite{Dobb1}. When $t=1$,
this decimation reduces to Niho's decimation $d=2^{\frac{n}{2}+1}-1$. Helleseth and Rosendahl proposed the decimation $d=\frac{1}{2^t-1}(2^{n}+2^{t+1}-2^{\frac{n}{2}+1}-1)$ with $t>0$, $n\equiv0\pmod4$, and $4t\mid n$
\cite{Helle3}. This family yields four-valued cross-correlation
functions, and the cases $t=1$ and $t=n/4$ are equivalent to
the two Niho decimations above, respectively.  Building on these results, Dobbertin et al. further investigated the Niho-type decimations for even $n$ and integers
$s$ satisfying $0<s<n/2$ and $v_2(s)<v_2(n/2)$:
\begin{equation}\label{d12}
d_1
=\frac{2^s}{2^s+1}(2^{\frac{n}{2}}-1)+1
\quad\text{and}\quad
d_2=\frac{2^s}{2^s-1}(2^{\frac{n}{2}}-1)+1,
\end{equation}
where $v_2(N)$ denotes the exponent of $2$ in the prime factor decomposition of a positive integer $N$, and $\frac{1}{2^s\pm1}$ denotes the multiplicative inverse of $2^s\pm1$ modulo $2^{n/2}+1$ \cite{Dobb2}. As noted by Dobbertin et al. \cite{Dobb2}, the decimations in \eqref{d12} include the four-valued
Niho-type cases known at that time and also yield new
four-valued cross-correlation functions. Some four-valued cross-correlation functions in nonbinary cases were also studied in \cite{Helle1,Xia1,Ranto1}.

Let $p$ be a prime and $n$ a positive integer. For an $m$-sequence $\{u(t)\}$ of period $p^n-1$, when the decimation $d$ is not
coprime to $p^n-1$, it gives rise to a family of short-period
decimated sequences $\{u(dt+l)\}$, $l=0,1,\ldots,g-1$, where $g=\gcd(d,p^n-1)>1$. The cross-correlation between an $m$-sequence and these
short-period sequences is also of considerable interest and has been studied extensively over the past several
decades \cite{Helle6,Kumar1,Muller1,Ness1,Ness2,Hu, Xia2010,Seo1,Seo2,Xia0}. For example, M$\ddot{\rm u}$ller characterized the cross-correlation function $C_d(t)$ between a ternary $m$-sequence of period $3^m-1$ and its $d$-decimated sequence of period $\frac{3^m-1}{2}$, where $m$ is odd and $d=\frac{3^m+1}{4}+\frac{3^m-1}{2}$. He established an upper bound on $|1+C_d(t)|$ in \cite{Muller1}. This result was subsequently generalized in \cite{Hu} and improved in \cite{Xia2010}. For many decimations $d$, the values and distribution of
the cross-correlation between an $m$-sequence and a short-period $d$-decimated sequence have been
determined \cite{Fan1,Helle6,Kumar1,Luo4,Luo5,Luo1,Ness1,Ness2,Seo1,Seo2,Xia2}. However, these results often concern only a fixed value
of $l$. Complete distributions for every $l=0,1,\ldots,g-1$ are known for only a few decimations $d$;
see \cite{Fan1, Kumar1, Luo4,Luo1,Seo1,Seo2,Xia2}. Among the known cross-correlation functions between sequences
of different periods, three-valued cases can be found in \cite{Helle6,Luo5}, and four-valued cases in
\cite{Helle6,Ness2,Luo4,Fan1,Luo1,Seo2,Xia2}.

The purpose of this paper is to determine the four-valued
cross-correlation distributions between an $m$-sequence
and each of its short-period $d$-decimated sequences.
To this end, we revisit the Niho-type decimations in
\eqref{d12} under parameter conditions for which the
decimation is not coprime to the period.  Let $n=2k$, where $k$ is a positive integer.
As shown in \cite{Dobb2}, for $0<s<k$, there are two
noncoprime cases of \eqref{d12}: $d_1$ with
$v_2(s)>v_2(k)$ and $d_2$ with $v_2(s)=v_2(k)$. We consider the following Niho-type decimation:
\begin{equation}\label{new-d}
	d=\frac{2^s}{2^s+1}\left(2^k-1\right)+1,\,\,1<s<n,\,\,v_2(s)>v_2(k).
\end{equation}
Here we extend the range of $s$ to $1<s<2k$. The condition $v_2(s)>v_2(k)$ ensures that $2^s+1$ is invertible modulo $2^k+1$ (see Lemma \ref{lem3} for details), so $d$ is well defined modulo $2^{n}-1$. This condition also excludes $s=k$. For $1<s<k$, the decimation in \eqref{new-d} gives the
noncoprime $d_1$ branch directly. For $k<s<2k$, the
congruence $2^k\equiv-1\pmod{2^k+1}$ gives
$$
 \frac{2^s}{2^s+1}\equiv
 \frac{-2^{s-k}}{-2^{s-k}+1}\equiv
 \frac{2^{s-k}}{2^{s-k}-1}\pmod{2^k+1}.
$$
Consequently,
$$d\equiv
\frac{2^{s-k}}{2^{s-k}-1}(2^k-1)+1
\pmod{2^n-1}.$$ Since $0<s-k<k$ and $v_2(s-k)=v_2(k)$, this gives
the noncoprime $d_2$ branch with parameter $s-k$. Conversely, suppose that $0<s<k$ and $v_2(s)=v_2(k)$. Then $k<s+k<2k$ and $v_2(s+k)>v_2(k)$. Using $2^{s+k}\equiv-2^s\pmod{2^k+1}$, we obtain $$d_2=\frac{2^s}{2^s-1}\left(2^k-1\right)+1\equiv\frac{2^{s+k}}{2^{s+k}+1}\left(2^k-1\right)+1\pmod{2^n-1}.$$ Thus, every decimation in the noncoprime $d_2$ branch
can be represented by \eqref{new-d} with parameter $s+k$. Therefore, \eqref{new-d} provides a unified representation
of the two noncoprime branches of \eqref{d12}.

In this paper, for the decimation $d$ in \eqref{new-d}, we
investigate the cross-correlation between a binary
$m$-sequence and each of its short-period $d$-decimated
sequences. Each cross-correlation function
is proved to be four-valued and its value
distribution is completely determined. Our method does not require the third or higher moment
identities. We relate a coefficient of the associated
unit-circle equation to one of its roots, and then
establish a double-counting identity for the desired
frequency. To evaluate this identity, we fix a root
and map the remaining roots bijectively to the solutions
of a linearized equation in the trace-one set.
The remaining frequencies follow from the total
frequency and the first two moments.

 Table \ref{tab:table1} summarizes some known Niho-type
decimations that yield four-valued cross-correlation functions.
The previously known noncoprime binary decimations listed
in the table satisfy $\gcd(d,2^n-1)=3$ and yield three
short-period decimated sequences. In contrast, our decimation $d$ in \eqref{new-d} yields
$2^{\gcd(s,k)}+1$ cyclically inequivalent short-period
decimated sequences, a quantity strictly greater than $3$ provided that $\gcd(s,k)>1$. The decimations in \cite{Fan1,Luo1} and the binary case of
\cite{Xia2} can be written in the normalized Niho forms
\begin{equation}\label{forcomp}
d=\frac{2^t}{2^t-1}(2^k-1)+1,\quad
d=\frac15(2^k-1)+1,\quad
d=2(2^k-1)+1,
\end{equation} respectively, where $k>1$ is odd and, in the first case,
$t$ is odd, $1\le t<k$, and $\gcd(t,k)=1$.
These decimations are special cases of \eqref{new-d},
with corresponding parameters $s=k+t$, $s=2k-2$,
and $s=k+1$. Thus, the results for the decimations in \cite{Fan1,Luo1}
and for the binary case of \cite{Xia2} are included
as special cases of our result.

The remainder of this paper is organized as follows.
Section \ref{pre} introduces the necessary preliminaries.
Section \ref{sec3} determines the four-valued cross-correlation
distributions between a binary $m$-sequence and each of
its short-period $d$-decimated sequences.
Finally, Section \ref{con} concludes the study.

\begin{table}[!htb]
	\scriptsize
	\caption{Known Niho-type decimations having four-valued cross-correlation}
	\label{tab:table1}\vspace{0.2cm}

	\centering
	\setlength{\tabcolsep}{1.6pt}
	\renewcommand{\arraystretch}{1.15}
	\begin{tabular}{@{}
            >{\centering\arraybackslash}m{0.08\textwidth}
            >{\raggedright\arraybackslash}m{0.38\textwidth}
            >{\raggedright\arraybackslash}m{0.27\textwidth}
            >{\centering\arraybackslash}m{0.12\textwidth}
            >{\centering\arraybackslash}m{0.08\textwidth}
			@{}}
		\toprule
		\(p\)
		& Exponent \(d\)
		& Parameter conditions
		& \(\gcd(d,p^n-1)\)
		& Ref. \\
		\midrule
		\multirow{5}{*}{\(2\)}
		& $\begin{aligned}
			d_1&=2(2^k-1)+1,\\
			d_2&=(2^k+1)(2^{k/2}-1)+2
		\end{aligned}$
		& $n=2k$, $k$ even
		& $1$
		& \cite{Niho1} \\
		\cmidrule{2-5}
		& $\displaystyle d=\sum_{i=0}^{2k}2^{it}$
		& $n=4k$, $0<t<n$, $\gcd(n,t)=1$
		& $1$
		& \cite{Dobb1} \\
		\cmidrule{2-5}
		& $\displaystyle d=\frac{2^{2k}+2^{t+1}-2^{k+1}-1}{2^t-1}$
		& $n=2k$, $2t\mid k$
		& $1$
		& \cite{Helle3} \\
		\cmidrule{2-5}
	
		& $\begin{aligned}
			d&=s_0(2^k-1)+1,\\
			s&_0\equiv 2^s(2^s\pm1)^{-1}\pmod{2^k+1}
		\end{aligned}$
		& $n=2k$, $0<s<k$, $v_2(s)<v_2(k)$
		& $1$
		& \cite{Dobb2} \\
	
				\midrule
		odd prime
		& $d=2p^k-1$
		& $n=2k$, $p^k\not\equiv2\pmod 3$
		& $1$
		& \cite{Helle1} \\
		\midrule
		\multirow{2}{*}{\(2\)}
		& $d=\rho(2^k-1)+1,\quad 5\rho\equiv1\pmod{2^k+1}$
		& $\begin{aligned}
			n&=2k,\ k\text{ odd};\\
			l&=0,1,2
		\end{aligned}$
		& $3$
		& \cite{Luo1} \\
		\cmidrule{2-5}
	
		& $\begin{aligned}
			d&=(h+1)(2^k-1)+1,\\
			h&(2^t-1)\equiv 1\pmod{2^k+1}
		\end{aligned}$
		& $\begin{aligned}
			n&=2k,\ k,t\text{ odd},\\
			1&\le t<k,\ \gcd(k,t)=1;\\
			l&=0,1,2
		\end{aligned}$
		& $3$
		& \cite{Fan1} \\
		\midrule
		prime
		& $d=2p^k-1$, $p\equiv2\pmod 3$
		& $\begin{aligned}
			n&=2k,\ k \text{ odd},\\
			l&=0,1,2
		\end{aligned}$
		& $3$
		& \cite{Xia2} \\
		\midrule
		odd prime
		& $\displaystyle d=\left(\frac{p^{2k}+1}{2}\right)^2$
		& $\begin{aligned}
			n&=4k, \\
			0&\le l<\frac{p^{2k}+1}{2}
		\end{aligned}$
		
		& $\displaystyle\frac{p^{2k}+1}{2}$
		& \cite{Seo2} \\
		\midrule
		$2$
		& $\displaystyle d = \frac{2^s}{2^s+1}(2^k-1)+1$
		& $\begin{aligned}
			n&=2k,\ 1<s<2k, v_2(s)>v_2(k)\\
			0&\le l\le 2^r,r=\gcd(s,k);
		\end{aligned}$
		& $2^r+1$
		& This paper \\
		\bottomrule
	\end{tabular}
\end{table}

\section{Preliminaries}\label{pre}

Let $n$ and $k$ be two positive integers with $n=2k$, $\mathbb{F}_{2^n}$ denote a finite field with $2^n$ elements,  and $\mathbb{F}_{2^n}^{*}$ be the multiplicative group of $\mathbb{F}_{2^n}$.
For a positive integer $e$ with $e\,|\,n$, the {\it trace function} ${\rm Tr}_{e}^{n}(\cdot)$ from $\mathbb{F}_{2^n}$ to its subfield $\mathbb{F}_{2^e}$ is defined by
\[
{\rm Tr}_{e}^{n}(x)=\sum_{i=0}^{\frac{n}{e}-1} x^{2^{ei}}.
\]

For each element $x$ in the finite field $\mathbb{F}_{2^n}$, let $\bar{x}=x^{2^{k}}$. The {\it unit circle}  of $\mathbb{F}_{2^n}$ is the set
\begin{equation}\label{2.1}
	S=\left\{z \in \mathbb{F}_{2^n}: z\overline{z}=1\right\}.
\end{equation}
Since $\gcd(2^k-1,2^k+1)=1$, every element of
$\mathbb F_{2^n}^{*}$ has a unique representation $x=zy$ \cite{Niho1,Rose1}, where  $z \in S$ and $y \in \mathbb{F}_{2^k}^*$.

Let $\{u(t)\}$ be a binary $m$-sequence of period $2^n-1$, and we can assume that it is given by
\begin{equation}\label{2.2}
	u(t)=\mathrm{Tr}_{1}^{n}\left(\alpha^{t}\right),\ 0 \leq t \leq 2^n-2,
\end{equation}
where $\alpha$ is a primitive element of $\mathbb{F}_{2^n}$.
For a decimation $d$, let $\{v_l(t)=u(dt+l)\}$ be the {\it $l$-th $d$-decimation sequence} of $\{u(t)\}$, where $l$ is an integer with $0\leq l < \gcd(d,2^n-1)$. Let $L=\frac{2^n-1}{\gcd(d,2^n-1)}$.
Then $\{v_l(t)\}$ is a binary sequence whose least period
divides $L$. If $L>1$ and the multiplicative order of $2$
modulo $L$ is $n$, then its least period is exactly $L$. When $\gcd(d,2^n-1)=1$, there is only one $d$-decimated
sequence up to cyclic shifts, namely $\{v_0(t)\}$.
We denote it by $\{v(t)\}$ and refer to it as the
\emph{$d$-decimated sequence} of $\{u(t)\}$
\cite{Niho1,Helle2}.

The {\it cross-correlation function} \cite{Niho1,Helle2} between $\{u(t)\}$ and $\{v_l(t)\}$ at shift $\tau$ is defined by
\begin{align}\label{2.3}
	C_{d, l}(\tau)&=\sum\limits_{t=0}^{2^n-2} (-1)^{u(t+\tau)+v_l(t)}\nonumber \\
&=\sum\limits_{t=0}^{2^n-2} (-1)^{u(t+\tau)+u(dt+l)}\\
&=\sum\limits_{x \in \mathbb{F}_{2^n}^{*}} (-1)^{\mathrm{Tr}_{1}^{n}\left(ax^d+bx\right)},\nonumber
\end{align}
where $0 \leq \tau < 2^n-1$, $a=\alpha^l$ and $b=\alpha^{\tau}$. For each fixed $l$, the element $a$ is fixed, while $b$
runs through $\mathbb F_{2^n}^{*}$ as $\tau$ runs through
a complete period.

The cross-correlation values in \eqref{2.3} are closely
related to the numbers of solutions of sparse equations
over finite fields. The following root-count result will
be used to determine the possible cross-correlation values.
It follows from Lemma 22 and the unit-circle argument
in the proof of Theorem 23 in \cite{Dobb2}.

\begin{lem}{\rm(\!\!{\cite{Dobb2}})}\label{lem 1}
	Let $n$, $k$ and $s$ be positive integers with $n=2k$. Then the equation
	\[
	x^{2^s+1} + a x^{2^s} + b x + c=0,\quad a, b, c \in \mathbb{F}_{2^n}
	\]
	has either $0$, $1$, $2$ or $2^{\gcd(s,k)}+1$ distinct roots in the set $S$, which is given in (\ref{2.1}).
\end{lem}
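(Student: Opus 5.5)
The plan is to reduce the count of roots in $S$ to the known count of roots of a projective polynomial over the whole field, via a Möbius transformation that carries the unit circle onto a line. We write $q=2^s$ and $f(x)=x^{q+1}+ax^q+bx+c$, and put $r=\gcd(s,k)$.

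\textbf{Step 1: a parametrization of $S$.} Choose $\omega\in\mathbb F_{2^n}\setminus\mathbb F_{2^k}$. The map
\[
y\mapsto z=\frac{y+\omega}{y+\bar\omega}
\]
sends $\mathbb F_{2^k}\cup\{\infty\}$ bijectively onto $S$. It lands in $S$ because $\bar z=(y+\bar\omega)/(y+\omega)=z^{-1}$ for $y\in\mathbb F_{2^k}$. It is injective as a Möbius map, and the two sets have the same size $2^k+1$.

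\textbf{Step 2: pulling back the equation.} Substitute this $z$ into $f(z)=0$ and multiply by $(y+\bar\omega)^{q+1}$. Since $(y+\omega)^q=y^q+\omega^q$ is additive in $y$, we obtain
\[
(y^q+\omega^q)(y+\omega)+a(y^q+\omega^q)(y+\bar\omega)+b(y^q+\bar\omega^q)(y+\omega)+c(y^q+\bar\omega^q)(y+\bar\omega)=0.
\]
This has the form $A y^{q+1}+By^q+Cy+D=0$ with coefficients in $\mathbb F_{2^n}$, to be solved for $y\in\mathbb F_{2^k}$. The point $y=\infty$ (that is, $z=1$) is handled separately; it corresponds to the case $A=0$.

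\textbf{Step 3: restricting to the subfield.} The roots lie in $\mathbb F_{2^k}$, but the coefficients do not. We therefore write each coefficient as $E=E_0+\theta E_1$ with $E_0,E_1\in\mathbb F_{2^k}$, where $\{1,\theta\}$ is a basis of $\mathbb F_{2^n}$ over $\mathbb F_{2^k}$. A root $y\in\mathbb F_{2^k}$ must then satisfy two equations over $\mathbb F_{2^k}$ simultaneously:
\[
P_0(y)=A_0y^{q+1}+B_0y^q+C_0y+D_0=0,\qquad P_1(y)=A_1y^{q+1}+B_1y^q+C_1y+D_1=0.
\]
The intended approach is to reduce to a single equation.

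If some nontrivial $\mathbb F_{2^k}$-combination of $P_0$ and $P_1$ vanishes identically, the system collapses to one equation of the same shape over $\mathbb F_{2^k}$. In the generic case it does not, and we instead use the classical result of Bluher (and Helleseth–Kholosha) on projective polynomials. By a linear change of variable, $x^{q+1}+\beta x^q+\gamma x+\delta$ reduces to the form $x^{q+1}+Lx+L$, or degenerates to an affine $q$-polynomial. The number of roots of such a polynomial in a field $\mathbb F_{2^k}$ lies in
\[
\{0,\,1,\,2,\,2^{\gcd(s,k)}+1\},
\]
together with the degenerate linear cases, which have $0$, $1$, or $2^{\gcd(s,k)}$ roots. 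Applying this to one of the equations, say $P_0$, the common roots with $P_1$ form a subset of its roots. Since $P_0$ and $P_1$ are both invariant under the same $\mathrm{PGL}_2(\mathbb F_{2^r})$-type structure, that subset is either the whole set or has size at most $2$.

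\textbf{Main obstacle.} Two points remain delicate. First, adding back the point at infinity turns the degenerate count $2^r$ into $2^r+1$. Second, we must check that a partial intersection cannot yield an intermediate count such as $3\le m\le 2^r$. I would try to resolve the second point as follows. Three distinct common roots determine, up to Möbius equivalence, a unique curve of the form $\{x: x^{q+1}+\dots=0\}$, namely a $\mathbb P^1(\mathbb F_{2^r})$-orbit. Hence both equations are proportional, and the count jumps to $2^r+1$.

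This is exactly the unit-circle argument attributed to Dobbertin et al.\ (Lemma 22 of \cite{Dobb2}), so in the paper one may simply cite it.
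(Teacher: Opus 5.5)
Your Steps 1 and 2 are sound. The map $y\mapsto (y+\omega)/(y+\bar\omega)$ is the kind of unit-circle transfer the paper invokes when it cites \cite{Dobb2}; the paper gives no proof of its own. The case where $P_0$ and $P_1$ are proportional is also handled correctly. The gap is at the one nontrivial point of the lemma: you never prove that three roots in $S$ force exactly $2^r+1$ roots.

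Saying that $P_0$ and $P_1$ are ``invariant under the same $\mathrm{PGL}_2(\mathbb{F}_{2^r})$-type structure'' is not an argument. Bluher's count for $P_0$ alone says nothing about how the zero set of an unrelated $P_1$ meets it. Your proposed repair (three common roots determine a unique curve, hence $P_0$ and $P_1$ are proportional) asserts the conclusion. You justify neither that three points determine the zero set nor, more importantly, that equal zero sets force proportional polynomials. A minor point: the degenerate cases of the monic reduction are $y^{q+1}=F$ and $y(y^q+E)=0$. Affine $q$-polynomials occur only when $A_0=0$.

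The missing ingredient is a rigidity fact for binary forms $G(Y,Z)=AY^{q+1}+BY^qZ+CYZ^q+DZ^{q+1}$. Since $(\alpha Y+\beta Z)^q=\alpha^qY^q+\beta^qZ^q$, this family is stable under linear substitutions. Moreover,
\[
G(1,0)=A,\qquad G(0,1)=D,\qquad G(1,1)=A+B+C+D,
\]
so a form vanishing at $\infty,0,1$ equals $B\,(Y^qZ+YZ^q)$. Because $\mathrm{PGL}_2(\mathbb{F}_{2^k})$ is sharply $3$-transitive on $\mathbb{P}^1(\mathbb{F}_{2^k})$, the forms of this shape vanishing at three prescribed points of $\mathbb{P}^1(\mathbb{F}_{2^k})$ form a one-dimensional space over the coefficient field. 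This is what actually makes $P_0$ and $P_1$ proportional when they share three zeros.

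With this fact, Step 3 and Bluher's theorem become unnecessary. Apply it directly to the homogenized pull-back $H$ from Step 2, which has coefficients in $\mathbb{F}_{2^n}$ and is nonzero because the transfer is invertible. If $H$ has three zeros in $\mathbb{P}^1(\mathbb{F}_{2^k})$, move them to $\infty,0,1$ by an element of $\mathrm{PGL}_2(\mathbb{F}_{2^k})$. This turns $H$ into $\mu(Y^qZ+YZ^q)$ with $\mu\neq0$, whose zeros in $\mathbb{P}^1(\mathbb{F}_{2^k})$ are exactly the $2^r+1$ points of $\mathbb{P}^1(\mathbb{F}_{2^r})$. The point $z=1$ then needs no separate treatment.

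For the equation \eqref{3.61} actually used in the paper, the pull-back already has coefficients in $\mathbb{F}_{2^k}$ up to a nonzero scalar, because that equation is self-conjugate on $S$. So only your proportional case arises there.
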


The following conclusion concerns the values of a class of exponential sums over finite fields.
	
	\begin{lem} {\rm(\!\!\cite{Helle1,Fan1})}\label{lem 2}
		Let $n=2k$, and $N$ be an integer with $N>2$, $N\,|\,(2^n-1)$ and $N\,|\,(2^i+1)$ for some $i$. Suppose $j$ is the least positive integer such that $N\,|\,(2^j+1)$. Then
		\[
		\sum_{x \in \mathbb{F}_{2^n}}(-1)^{\mathrm{Tr}_{1}^{n}\left(cx^N\right)}=
		\begin{cases}
			2^n, & \text{if } c=0, \\
			(-1)^{\frac{k}{j}+1}(N-1)2^{k},& \text{if } c \text{ is an } N\text{-th power in } \mathbb{F}_{2^n}^{*}, \\
			(-1)^{\frac{k}{j}}2^{k}, & \text{if } c \text{ is not an } N\text{-th power in } \mathbb{F}_{2^n}^{*}.
		\end{cases}
		\]
\end{lem}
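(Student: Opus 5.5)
The plan is to expand the sum into Gauss sums of multiplicative characters of order dividing $N$, and then use the fact that the hypothesis $N\mid 2^j+1$ places us in the \emph{semiprimitive} case, where every such Gauss sum has the same explicit value.

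\textbf{Setup.} Let $q=2^n$ and let $\psi(x)=(-1)^{\Tr_1^n(x)}$ be the canonical additive character. Since $N\mid(2^j+1)$ and $N>2$, we have $2^j\equiv-1\pmod N$ and $2^j\not\equiv 1\pmod N$, so $2$ has multiplicative order exactly $2j$ modulo $N$. Because $N\mid 2^n-1$, we get $2j\mid n=2k$, that is, $j\mid k$; hence the exponent $k/j$ in the statement is an integer. The case $c=0$ is trivial, so assume $c\neq0$.

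\textbf{Reduction to Gauss sums.} The map $x\mapsto x^N$ on $\F_{2^n}^*$ is $N$-to-one onto the subgroup of $N$-th powers. Therefore, with $\chi$ running over the $N$ multiplicative characters of $\F_{2^n}^*$ satisfying $\chi^N=1$,
\[
\sum_{x\in\F_{2^n}}\psi(cx^N)=1+\sum_{y\in\F_{2^n}^*}\psi(cy)\sum_{\chi^N=1}\chi(y).
\]
The trivial character contributes $\sum_{y\neq0}\psi(cy)=-1$, which cancels the leading $1$. For $\chi\neq1$, substituting $y\mapsto c^{-1}y$ gives $\bar\chi(c)G(\chi)$, where $G(\chi)=\sum_{y\in\F_{2^n}^*}\chi(y)\psi(y)$. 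Hence
\[
\sum_{x\in\F_{2^n}}\psi(cx^N)=\sum_{\chi^N=1,\ \chi\ne1}\bar\chi(c)\,G(\chi).
\]

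\textbf{Main step: evaluating $G(\chi)$.} This is the semiprimitive case, since $-1$ is a power of $2$ modulo $N$. I will invoke the classical theorem on semiprimitive Gauss sums (Stickelberger; Baumert--McEliece; Berndt--Evans--Williams, Thm.~11.6.3). With $q=p^{2js}$ and $\chi$ of order $m>2$ dividing $p^j+1$, it states
\[
G(\chi)=(-1)^{s-1}\sqrt q \quad\text{for } p=2.
\]
Here $p=2$ and $s=n/(2j)=k/j$, and every nontrivial $\chi$ with $\chi^N=1$ has order $m$ dividing $N\mid 2^j+1$. Characters of order $m=2$ cannot occur, since $N\mid 2^n-1$ is odd. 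Thus, uniformly,
\[
G(\chi)=(-1)^{k/j+1}2^k.
\]

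I would like to include a self-contained justification in case the citation is unwelcome. For $m\mid 2^j+1$, the element $2^j$ acts as $-1$ modulo $m$, so $G(\chi)=G(\chi^{2^j})=G(\bar\chi)=\chi(-1)\overline{G(\chi)}=\overline{G(\chi)}$. Hence $G(\chi)$ is real with $|G(\chi)|=2^k$, so $G(\chi)=\pm2^k$. The sign is then fixed by a lifting/Stickelberger congruence; this sign determination is the main obstacle. One fallback is to reduce to $N'=2^j+1$ and evaluate $\sum_x\psi(cx^{2^j+1})$ directly as a quadratic form over $\F_2$, computing its radical and Arf invariant. This gives $\pm2^k$ with sign $(-1)^{k/j+1}$ for $c$ a $(2^j+1)$-th power. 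Averaging over the characters then recovers the individual $G(\chi)$.

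\textbf{Finishing.} Substituting the common value,
\[
\sum_{x\in\F_{2^n}}\psi(cx^N)=(-1)^{k/j+1}2^k\sum_{\chi^N=1,\ \chi\ne1}\bar\chi(c).
\]
By orthogonality, the inner sum equals $N-1$ when $c$ is an $N$-th power (all $\chi(c)=1$) and equals $-1$ otherwise. Multiplying through gives exactly the three cases of Lemma \ref{lem 2}.
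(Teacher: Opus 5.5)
The paper does not prove this lemma; it quotes it from the literature (Helleseth; Fan et al.). Your route is the standard derivation: write the sum as $\sum_{\chi^N=1,\,\chi\ne1}\bar\chi(c)G(\chi)$ and evaluate every $G(\chi)$ by the semiprimitive (Stickelberger/Baumert--McEliece) theorem. The character-sum reduction and the final orthogonality step are correct. Two inferences need an extra line, and both rest on the minimality of $j$, which your argument never actually invokes.

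\textbf{The order of $2$ modulo $N$.} The congruences $2^j\equiv-1$ and $2^j\not\equiv1\pmod N$ only give $\mathrm{ord}_N(2)\mid 2j$ and $\mathrm{ord}_N(2)\nmid j$. They do not give $\mathrm{ord}_N(2)=2j$: for $N=3$, $j=3$ both hold, but the order is $2$. To fix this, write $\mathrm{ord}_N(2)=2j'$ with $j'\mid j$ and $j/j'$ odd. Since $2^{2j'}\equiv1$, you get $-1\equiv2^j=(2^{j'})^{j/j'}\equiv2^{j'}\pmod N$, so $j'=j$ by minimality. Only then does $N\mid2^n-1$ give $j\mid k$.

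\textbf{Uniformity of the Gauss sums.} The semiprimitive evaluation, as usually stated, concerns a character of order $m$ with $t$ taken \emph{minimal} such that $m\mid2^t+1$. It gives $G(\chi)=(-1)^{k/t-1}2^k$. For a character of order $m\mid N$ with $m<N$, this $t=t_m$ can be strictly smaller than $j$, so your ``uniform'' value is not literally what the theorem provides. Apply the previous argument to $m$, using $2^j\equiv-1\not\equiv1\pmod m$ since $m\ge3$. It shows $t_m\mid j$ with $j/t_m$ odd, hence $k/t_m\equiv k/j\pmod 2$. This parity remark is exactly what makes $G(\chi)=(-1)^{k/j+1}2^k$ independent of the order of $\chi$.

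\textbf{The fallback.} For $c$ a $(2^j+1)$-th power, the quadratic form has a radical of dimension $2j$ and vanishes on it. So the sum equals $\pm2^{k+j}$, not $\pm2^k$. With that magnitude, recovering the individual Gauss sums is immediate: $2^j$ terms, each $\pm2^k$, can add up to $\pm2^j\cdot2^k$ only if they all have the same sign. The sign itself still has to be computed, as you acknowledge.
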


We recall the following standard identities for the greatest common divisors \cite{Kla1,Finite}.

\begin{lem}\label{lem 3}
	Let $t$ and $l$ be two positive integers. Then $\gcd(2^t-1,2^l-1)=2^{\gcd(t,l)}-1$,
	\[
	\gcd\left(2^t+1, 2^l-1\right)=
	\begin{cases}
		1, & \text{if } \frac{l}{\gcd(t,l)} \text{ is odd}, \\
		2^{\gcd(t,l)}+1, & \text{if } \frac{l}{\gcd(t,l)} \text{ is even}
	\end{cases}
	\]
	and
	\[
	\gcd\left(2^t+1, 2^l+1\right)=
	\begin{cases}
		1, & \text{if one of } \frac{t}{\gcd(t,l)},\frac{l}{\gcd(t,l)} \text{ is even}, \\
		2^{\gcd(t,l)}+1, & \text{if both } \frac{t}{\gcd(t,l)} \text{ and } \frac{l}{\gcd(t,l)} \text{ are odd}.
	\end{cases}
	\]
	
\end{lem}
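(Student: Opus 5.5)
The plan is to reduce all three identities to the first one, which I will prove by the Euclidean algorithm, and to use repeatedly the factorization $2^{2t}-1=(2^t-1)(2^t+1)$, whose two factors are coprime because they are odd and differ by $2$. Throughout, write $g=\gcd(t,l)$.

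\emph{First identity.} For positive integers $a=qb+r$ with $0\le r<b$, we have $2^a-1=2^r(2^{qb}-1)+(2^r-1)$. Since $2^b-1\mid 2^{qb}-1$, this gives $2^a-1\equiv 2^r-1 \pmod{2^b-1}$. So the pair $(2^a-1,\,2^b-1)$ follows the Euclidean algorithm on the exponents $(a,b)$ step by step. Hence $\gcd(2^t-1,2^l-1)=2^{g}-1$.

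\emph{Second identity.} Let $D=\gcd(2^t+1,2^l-1)$. Since $2^t+1\mid 2^{2t}-1$, the first identity gives $D\mid 2^{\gcd(2t,l)}-1$. We also know $2^g-1\mid 2^t-1$, which is coprime to $2^t+1$, so $\gcd(D,2^g-1)=1$.
\begin{itemize}
\item If $l/g$ is odd, then $\gcd(2t,l)=g$. Thus $D\mid 2^g-1$ while being coprime to it, so $D=1$.
\item If $l/g$ is even, then $t/g$ must be odd (since $\gcd(t/g,l/g)=1$), and $\gcd(2t,l)=2g$. So $D\mid 2^{2g}-1=(2^g-1)(2^g+1)$, and coprimality with $2^g-1$ forces $D\mid 2^g+1$. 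Conversely, $2^g+1\mid 2^t+1$ because $t/g$ is odd ($x+1\mid x^m+1$ for odd $m$). Also $2^g+1\mid 2^{2g}-1\mid 2^l-1$ because $2g\mid l$. Hence $D=2^g+1$.
\end{itemize}

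\emph{Third identity.} Let $E=\gcd(2^t+1,2^l+1)$. Then $E$ divides $\gcd(2^{2t}-1,2^{2l}-1)=2^{2g}-1=(2^g-1)(2^g+1)$. It is coprime to $2^g-1$ by the same argument as above, so $E\mid 2^g+1$.
\begin{itemize}
\item If both $t/g$ and $l/g$ are odd, then $2^g+1$ divides both $2^t+1$ and $2^l+1$, so $E=2^g+1$.
\item If, say, $t/g$ is even, then $2^t\equiv 1\pmod{2^g+1}$, so $2^t+1\equiv 2\pmod{2^g+1}$. Since $E$ is odd and divides both $2^g+1$ and $2^t+1$, it divides $2$, hence $E=1$. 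The case $l/g$ even is symmetric.
\end{itemize}

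None of these steps is a real obstacle. The only point needing care is the parity bookkeeping: that $t/g$ and $l/g$ cannot both be even, and that $\gcd(2t,l)$ equals $g$ or $2g$ according to the parity of $l/g$.
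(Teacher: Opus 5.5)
Your proof is correct and complete. The paper gives no proof of this lemma; it only recalls the three identities as standard facts from the literature, so there is no argument in the source to compare against step by step.

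Your route is the standard self-contained derivation:
\begin{itemize}
\item the first identity via the Euclidean algorithm on the exponents;
\item the other two from the factorization $2^{2t}-1=(2^t-1)(2^t+1)$, using that the two factors are coprime (odd and differing by $2$);
\item the observation that $\gcd(2t,l)$ equals $g$ or $2g$ according to the parity of $l/g$.
\end{itemize}
Each case checks out, including the congruence $2^t\equiv(-1)^{t/g}\pmod{2^g+1}$ that settles the third identity when $t/g$ or $l/g$ is even.

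Citing the result buys the authors brevity. Your version shows exactly where the parity conditions enter. That is what the paper relies on later, when it turns $v_2(s)>v_2(k)$ into ``$s/r$ even, $k/r$ odd'' to obtain $\gcd(2^s-1,2^k+1)=2^r+1$ and $\gcd(2^s+1,2^k+1)=1$ in Lemma~\ref{lem3} and \eqref{3.1}.
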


The following result is a consequence of Lemma \ref{lem 3}.

\begin{lem}\label{lem3}
	Let \(n=2k\) and $d$ be the Niho-type decimation defined  in \eqref{new-d}.  Set $r=\gcd(s,k)$. Then
	\[
	\gcd(d,2^n-1)=2^r+1
	\]
\end{lem}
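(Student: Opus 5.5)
The plan is to split the modulus using $2^n-1=(2^k-1)(2^k+1)$. The two factors are coprime, since $\gcd(2^k-1,2^k+1)=\gcd(2^k-1,2)=1$. Hence
\[
\gcd(d,2^n-1)=\gcd(d,2^k-1)\cdot\gcd(d,2^k+1),
\]
and I will compute each factor from a congruence for $d$ modulo that factor. Write $d=s_0(2^k-1)+1$, where $s_0$ is an integer with $s_0(2^s+1)\equiv 2^s\pmod{2^k+1}$. The gcd does not depend on which representative of $d$ modulo $2^n-1$ is chosen.

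First I check that $d$ is well defined, i.e. that $2^s+1$ is invertible modulo $2^k+1$. Put $r=\gcd(s,k)$. The hypothesis $v_2(s)>v_2(k)$ gives $v_2(r)=v_2(k)<v_2(s)$. So $s/r$ is even and $k/r$ is odd. By the third formula of Lemma \ref{lem 3}, $\gcd(2^s+1,2^k+1)=1$.

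The factor $2^k-1$ is immediate. We have $d\equiv 1\pmod{2^k-1}$, so $\gcd(d,2^k-1)=1$.

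For the factor $2^k+1$, use $2^k-1\equiv -2\pmod{2^k+1}$. Multiplying $d$ by the unit $2^s+1$ gives
\[
(2^s+1)d\equiv -2\cdot 2^s+(2^s+1)=1-2^s\pmod{2^k+1}.
\]
Since $2^s+1$ is a unit modulo $2^k+1$,
\[
\gcd(d,2^k+1)=\gcd\bigl((2^s+1)d,\,2^k+1\bigr)=\gcd(2^s-1,\,2^k+1).
\]
Now apply the second formula of Lemma \ref{lem 3} with $t=k$ and $l=s$. Because $s/\gcd(s,k)=s/r$ is even, $\gcd(2^k+1,2^s-1)=2^r+1$.

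Combining the two factors gives $\gcd(d,2^n-1)=1\cdot(2^r+1)=2^r+1$. I do not expect a real obstacle here. The only points needing care are translating $v_2(s)>v_2(k)$ into the parity conditions on $s/r$ and $k/r$, and justifying that multiplying by a unit does not change the gcd.
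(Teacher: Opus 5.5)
Your proposal is correct and matches the paper's proof essentially step for step: you translate $v_2(s)>v_2(k)$ into $s/r$ even and $k/r$ odd, reduce to $\gcd(d,2^k+1)$ using $d\equiv 1\pmod{2^k-1}$, multiply by the unit $2^s+1$ to obtain $1-2^s$, and finish with the second formula of Lemma~\ref{lem 3}. Your added remark that the gcd does not depend on the representative of $d$ modulo $2^n-1$ makes explicit a point the paper leaves implicit.
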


\begin{proof}
	 Since $v_2(s)>v_2(k)$, $\frac{s}{r}$ is even and
$\frac{k}{r}$ is odd. Lemma \ref{lem 3} therefore gives
	\begin{equation}\label{2.4}
		\gcd(2^s+1,2^k+1)=1.\end{equation}
	Thus, $2^s+1$ is invertible modulo $2^k+1$, and $d$ is well defined.
	
	Since $d\equiv1\pmod{2^k-1}$, we have
$\gcd(d,2^k-1)=1$. Moreover, since
$2^n-1=(2^k-1)(2^k+1)$ and
$\gcd(2^k-1,2^k+1)=1$, it follows that $$\gcd(d,2^n-1)=\gcd(d,2^k+1).$$
	
	Note that
	\[
	(2^s+1)d
	\equiv
	2^s(2^k-1)+(2^s+1)
	\equiv
	1-2^s
	\pmod{2^k+1}.
	\]
	Then equality (\ref{2.4}) implies $$\gcd(d,2^k+1)=\gcd((2^s+1)d,2^k+1)=\gcd(2^s-1,2^k+1).$$
Since $\frac{s}{r}$ is even, by Lemma \ref{lem 3}, we get $$\gcd(2^s-1,2^k+1)=2^{\gcd(s,k)}+1=2^r+1.$$
	
Combining the above equalities,  the desired result is obtained.
\end{proof}

The following two results will be used in Section \ref{sec3}
to evaluate $B_{2^r+1}(a,\beta)$, which is defined in \eqref{B}.

\begin{lem}\label{tr}
	 Let $n$, $k$, $s$, and $r$ be as in \eqref{new-d}
and Lemma~\ref{lem3}. Define
		\begin{equation}\label{V}V=\left\{v\in\F_{2^n}:\Tr_k^n(v)=1\right\}.\end{equation}
	Then,
		\[
		\left\{
		v^{2^s}+v:v\in V
		\right\}
		=
		\left\{
		y\in\F_{2^k}:\Tr_r^k(y)=0
		\right\}=\left\{x^{2^s}+x: x \in \F_{2^k} \right\}.
		\]
		This set has cardinality $2^{k-r}$. Moreover, for every $y\in\F_{2^k}$ with $\Tr_r^k(y)=0$, we have
$\left|\left\{v\in V:v^{2^s}+v=y\right\}
	\right|
	=2^r.$
	\end{lem}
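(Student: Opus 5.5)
The plan is to reduce everything to the $\F_2$-linear map $L(x)=x^{2^s}+x$ on $\F_{2^k}$ by writing $V$ as a coset of $\F_{2^k}$, and to choose the coset representative so that it contributes nothing. Throughout I use the facts from the proof of Lemma \ref{lem3}: since $v_2(s)>v_2(k)$, $s/r$ is even and $k/r$ is odd, where $r=\gcd(s,k)$.

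\textbf{Step 1: a base point $v_0\in V$ with $v_0^{2^s}=v_0$.} Since $2r\mid n$, the field $\F_{2^{2r}}$ is a subfield of $\F_{2^n}$. Because $\Tr_r^{2r}$ is surjective, I can pick $v_0\in\F_{2^{2r}}$ with $v_0+v_0^{2^r}=1$.
\begin{itemize}
\item Since $k/r$ is odd, $k\equiv r\pmod{2r}$, so $v_0^{2^k}=v_0^{2^r}$. Hence $\Tr_k^n(v_0)=v_0+v_0^{2^k}=1$, i.e. $v_0\in V$.
\item Since $2r\mid s$, we have $v_0^{2^s}=v_0$.
\end{itemize}

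\textbf{Step 2: the coset description.} $\Tr_k^n$ is $\F_{2^k}$-linear and its kernel is $\{x: x^{2^k}=x\}=\F_{2^k}$. So $V=v_0+\F_{2^k}$, and $v=v_0+x$ with $x\in\F_{2^k}$ is a bijection between $\F_{2^k}$ and $V$. For such $v$,
\[
v^{2^s}+v=(v_0^{2^s}+v_0)+(x^{2^s}+x)=L(x).
\]
Therefore $\{v^{2^s}+v: v\in V\}=L(\F_{2^k})$, which is the second equality of the statement. Moreover, for each $y$, the fibre $\{v\in V: v^{2^s}+v=y\}$ is in bijection with $\{x\in\F_{2^k}: L(x)=y\}$.

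\textbf{Step 3: image and fibres of $L$.} The kernel of $L$ on $\F_{2^k}$ is $\F_{2^k}\cap\F_{2^s}=\F_{2^r}$. Hence $|L(\F_{2^k})|=2^{k-r}$, and every nonempty fibre of $L$ has exactly $2^r$ elements.

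The image lies in $\ker\Tr_r^k$. Indeed, $r\mid s$, so Frobenius $x\mapsto x^{2^s}$ permutes the conjugates over $\F_{2^r}$, giving $\Tr_r^k(x^{2^s})=\Tr_r^k(x)$. Thus $\Tr_r^k(L(x))=0$.

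Since $\Tr_r^k:\F_{2^k}\to\F_{2^r}$ is surjective, $|\ker\Tr_r^k|=2^{k-r}$. Comparing cardinalities gives $L(\F_{2^k})=\{y\in\F_{2^k}:\Tr_r^k(y)=0\}$.

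\textbf{Conclusion and main obstacle.} Combining Steps 2 and 3 gives all three set equalities and the cardinality $2^{k-r}$. Each $y$ with $\Tr_r^k(y)=0$ then has exactly $2^r$ preimages in $V$.

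The only genuinely delicate point is Step 1. For an arbitrary $v_0\in V$ one gets a shift $c=v_0^{2^s}+v_0$, and one would have to check separately that $\Tr_r^k(c)=0$. Choosing $v_0\in\F_{2^{2r}}$ uses both parity conditions ($s/r$ even and $k/r$ odd) and removes this issue entirely.
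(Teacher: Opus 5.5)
Your proof is correct, but it runs in the opposite order from the paper's.

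\textbf{The paper's route.} The paper works directly on $V$. It first checks that $v^{2^s}+v\in\F_{2^k}$. It then rewrites $\Tr_r^k(y)=\sum_{i<k/r}y^{2^{is}}$ by a residue-system rearrangement, telescopes to $\Tr_r^k(v^{2^s}+v)=v^{2^{ks/r}}+v$, and uses that $s/r$ is even to make this vanish. Next it computes the fibres inside $V$ (two elements in the same fibre differ by an element of $\F_{2^r}$) and compares cardinalities with $\ker\Tr_r^k$. Only at the very end does it deduce, from the set equality just proved, that some $v_0\in V$ satisfies $v_0^{2^s}+v_0=0$. That base point is what yields the third description $\{x^{2^s}+x:x\in\F_{2^k}\}$.

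\textbf{Your route.} You construct such a base point explicitly in $\F_{2^{2r}}$ at the start. This reduces the whole lemma to the standard kernel/image description of your map $L$ on $\F_{2^k}$. Your inclusion $L(\F_{2^k})\subseteq\ker\Tr_r^k$, via $\Tr_r^k(x^{2^s})=\Tr_r^k(x)^{2^s}=\Tr_r^k(x)$, is simpler than the paper's rearrangement and holds with no parity hypothesis.

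\textbf{What each buys.} Your route isolates exactly where $v_2(s)>v_2(k)$ enters: only in guaranteeing that the coset $V$ contains a point fixed by $x\mapsto x^{2^s}$. For a general $v_0\in V$ the image is the translate $(v_0^{2^s}+v_0)+L(\F_{2^k})$. The paper's telescoping computation is, in effect, the check that this shift lies in $\ker\Tr_r^k$. The paper's route avoids the auxiliary subfield $\F_{2^{2r}}$ and gives a pointwise identity for every $v\in V$, at the cost of a longer computation.
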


	\begin{proof}
		For every $v\in V$, it can be verified that $\left(v^{2^s}+v\right)^{2^k}
			=(v^{2^k})^{2^s}+v^{2^k}=(v+1)^{2^s}+v+1=v^{2^s}+v$ due to $\Tr_k^n(v)=1$, i.e., $v^{2^k}=v+1$.
		Thus $v^{2^s}+v\in\F_{2^k}$.

For the complete residue system $\left\{0,1,\cdots,\frac{k}{r}-1\right\}$ modulo $\frac{k}{r}$, by $\gcd\left(\frac{s}{r},\frac{k}{r}\right)=1$, we have
that $\left\{0,\frac{s}{r},2\cdot\frac{s}{r}\cdots,(\frac{k}{r}-1)\cdot\frac{s}{r}\right\}$ is also a complete residue system modulo $\frac{k}{r}$
\cite[Theorem~2.6, p.~51]{Niven}. Multiply each residue class in these two complete residue systems by $r$, then we have
\[
\left\{is\bmod k:0\leq i<\frac{k}{r}\right\}=\left\{ir\bmod k:0\leq i<\frac{k}{r}
\right\}.
\]
This shows that for $y\in \F_{2^k}$,
$\Tr_r^k\left(y\right)
=\sum\limits_{i=0}^{\frac{k}{r}-1}y^{2^{ir}}
=\sum\limits_{i=0}^{\frac{k}{r}-1}y^{2^{is}}	
$
and then by $v^{2^s}+v\in\F_{2^k}$,
$$
	\Tr_r^k\left(v^{2^s}+v\right)
	=\sum\limits_{i=0}^{\frac{k}{r}-1}\left(v^{2^s}+v\right)^{2^{is}}
	=\sum\limits_{i=0}^{\frac{k}{r}-1}\left(v^{2^{(i+1)s}}+v^{2^{is}}\right)	
	=\sum\limits_{j=1}^{\frac{k}{r}}v^{2^{js}}+\sum_{i=0}^{\frac{k}{r}-1}v^{2^{is}}	
	=v^{2^{\frac{ks}{r}}}+v.
$$
The fact $v_2(s)>v_2(k)$ implies that $\frac{s}{r}$ is even. This together with $v^{2^k}=v+1$ gives
$$v^{2^{\frac{ks}{r}}}=(v+1)^{2^{k\left(\frac{s}{r}-1\right)}}=v^{2^{k\left(\frac{s}{r}-1\right)}}+1=(v+1)^{2^{k\left(\frac{s}{r}-2\right)}}+1=v^{2^{k\left(\frac{s}{r}-2\right)}}=\cdots=v.$$
Thus, $\Tr_r^k\left(v^{2^s}+v\right)=0$
and therefore
\begin{equation}\label{<}
		\left\{v^{2^s}+v:v\in V\right\}\subseteq\left\{	y\in\F_{2^k}:\Tr_r^k(y)=0
		\right\}.
\end{equation}

For a given $v_0\in V$, let $v\in V$ satisfy $v^{2^s}+v=v_0^{2^s}+v_0$. Then, $(v+v_0)^{2^s}=v+v_0$.
Since $v,v_0\in V$, we also have $(v+v_0)^{2^k}=v^{2^k}+v_0^{2^k}=v+v_0$. Thus, $v+v_0\in\F_{2^k}$ and
$(v+v_0)^{2^s}=v+v_0$. Since $r=\gcd(s,k)$,
it follows that $v+v_0\in\F_{2^r}$. Conversely, if $\mu\in\F_{2^r}$, then $r\,|\,s$ and $r\,|\, k$ imply $\mu^{2^s}=\mu$ and $\mu^{2^k}=\mu$, respectively. Thus, we have
\[
(v_0+\mu)^{2^s}+(v_0+\mu)=v_0^{2^s}+v_0+\mu+\mu=v_0^{2^s}+v_0
\]
and\[\Tr_k^n(v_0+\mu)=\Tr_k^n(v_0)+\mu\Tr_k^n(1)=1.
\]
So $v_0+\mu\in V$.

Thus, for each fixed $v_0\in V$, the elements $v\in V$
satisfying $v^{2^s}+v=v_0^{2^s}+v_0$ are precisely
$v=v_0+\mu$ with $\mu\in\F_{2^r}$.
Hence each $y\in\{v^{2^s}+v:v\in V\}$ has exactly
$2^r$ preimages in $V$. Since $|V|=2^k$, we obtain
\begin{equation}\label{car}
	\left|\left\{v^{2^s}+v: v \in V \right\}\right|=\frac{2^k}{2^r}=2^{k-r}.
\end{equation}
On the other hand, since $\Tr_r^k(\cdot)$ is a surjection from $\F_{2^k}$ to $\F_{2^r}$,
\begin{equation}\label{yk}
\left|\left\{y\in\F_{2^k}:\Tr_r^k(y)=0\right\}\right|=2^{k-r}.
\end{equation}
Combining \eqref{<}, \eqref{car} and \eqref{yk} yields
\begin{equation}\label{seteq}
\left\{v^{2^s}+v: v \in V \right\}	=\left\{y\in\F_{2^k}:\Tr_r^k(y)=0\right\}.
\end{equation}
Thus, for every $y\in\F_{2^k}$ with $\Tr_r^k(y)=0$, the preceding preimage count gives
$$\left|\left\{v\in V:v^{2^s}+v=y\right\}
	\right|
	=2^r.$$

Moreover, according to \eqref{seteq}, we can choose an element $v_0\in V$ such that $v_0^{2^s}+v_0=0$. Then, $V$ defined in \eqref{V} can be expressed as
$V=\{v_0+x: x\in \F_{2^k}\}$. Thus,
$$\left\{v^{2^s}+v: v \in V \right\}=\left\{(v_0+x)^{2^s}+(v_0+x): x \in \F_{2^k} \right\}=\left\{x^{2^s}+x: x \in \F_{2^k} \right\}.$$
This completes the proof.
\end{proof}

\begin{cor}\label{y}
	 With the notation introduced in Lemma \ref{tr}, for a given element $\delta\in \F_{2^k}$ satisfying $\Tr_r^k(\delta)=0$, there are exactly $2^{k-r}-1$ elements $c\in \F_{2^k}^*$ such that the equation
		\begin{equation}\label{x}
			x^{2^s}+x=\delta c^{2^s}+c
		\end{equation}
in variable $x$ has a solution in $\F_{2^k}$.
	\end{cor}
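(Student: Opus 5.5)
By Lemma~\ref{tr}, the image of $x\mapsto x^{2^s}+x$ on $\F_{2^k}$ is exactly $\{y\in\F_{2^k}:\Tr_r^k(y)=0\}$. So equation \eqref{x} has a solution in $\F_{2^k}$ if and only if
\[
\Tr_r^k\left(\delta c^{2^s}+c\right)=0 .
\]
The plan is to count the $c\in\F_{2^k}^*$ satisfying this trace condition. The hypothesis $\Tr_r^k(\delta)=0$ will be used to show that the condition is cut out by a single surjective $\F_{2^r}$-linear functional.

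First, note that $\Tr_r^k$ is invariant under the Frobenius $y\mapsto y^{2^s}$, since $r\mid s$ and $y^{2^s}$ is a conjugate of $y$ over $\F_{2^r}$. This gives
\[
\Tr_r^k(c)=\Tr_r^k(c^{2^s}).
\]
Hence
\[
\Tr_r^k\left(\delta c^{2^s}+c\right)=\Tr_r^k\left((\delta+1)c^{2^s}\right).
\]
Next, write $c^{2^s}=w$. The map $c\mapsto w$ is a bijection of $\F_{2^k}$ that fixes $0$. The condition therefore becomes $\Tr_r^k((\delta+1)w)=0$ with $w\in\F_{2^k}^*$.

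Now $\delta+1\neq 0$. Indeed, $\Tr_r^k(1)=k/r \bmod 2=1$, because $k/r$ is odd when $v_2(s)>v_2(k)$, as noted in the proof of Lemma~\ref{lem3}. So $\Tr_r^k(\delta+1)=\Tr_r^k(\delta)+1=1$, which forces $\delta+1\neq0$. Consequently $w\mapsto \Tr_r^k((\delta+1)w)$ is a surjective $\F_{2^r}$-linear map $\F_{2^k}\to\F_{2^r}$. Its kernel has exactly $2^{k-r}$ elements, and removing $w=0$ leaves $2^{k-r}-1$ nonzero elements. These correspond bijectively to $2^{k-r}-1$ values $c\in\F_{2^k}^*$, which proves the count.

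The only step needing care is establishing $\delta\neq1$, i.e.\ that $\Tr_r^k(1)=1$. This is where the parity of $k/r$ coming from $v_2(s)>v_2(k)$ enters; everything else is linear algebra over $\F_{2^r}$.
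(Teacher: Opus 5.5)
Your proof is correct and follows essentially the same route as the paper. You reduce solvability to $\Tr_r^k(\delta c^{2^s}+c)=0$ via Lemma~\ref{tr}, then use Frobenius invariance of $\Tr_r^k$ (since $r\mid s$) to collapse this to a single $\F_{2^r}$-linear functional whose multiplier is nonzero because $\Tr_r^k(1)=1$ ($k/r$ odd), and finally count the nonzero kernel elements. The only cosmetic difference is that the paper moves the Frobenius onto $\delta$, writing the condition as $\Tr_r^k\bigl((\delta^{2^{2k-s}}+1)c\bigr)=0$, whereas you move it onto $c$ via the substitution $w=c^{2^s}$.
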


	\begin{proof} By Lemma \ref{tr}, the equation \eqref{x} has solutions if and only if  $\Tr_r^k\left(\delta c^{2^s}+c\right)=0$.
Since $\delta^{2^{2k}}=\delta$ and $r\mid s$,
		\[\Tr_r^k\left(\delta c^{2^s}+c\right)=\Tr_r^k\left(\left(\delta^{2^{2k-s}}c\right)^{2^s}+c\right)=\Tr_r^k\left(\delta^{2^{2k-s}} c+c\right)=\Tr_r^k\left( \left(\delta^{2^{2k-s}}+1\right)c\right).\]
Moreover, since $r\mid(2k-s)$ and $\frac{k}{r}$ is odd, we have $$\mathrm{Tr}_r^k\left(\delta^{2^{2k-s}}+1\right)=\mathrm{Tr}_r^k(\delta)+\mathrm{Tr}_r^k(1)=0+1=1.$$  Thus, $\delta^{2^{2k-s}}+1\ne 0$, and the map
$c\mapsto\left(\delta^{2^{2k-s}}+1\right)c$ is a bijection
on $\F_{2^k}^*$. By \eqref{yk}, exactly $2^{k-r}-1$ nonzero elements $y$ of $\F_{2^k}$
satisfy $\Tr_r^k(y)=0$.  Consequently, there are exactly $2^{k-r}-1$ elements $c\in \F_{2^k}^*$ such that the equation  has a solution in $\F_{2^k}$. This completes the proof.
		 \end{proof}

\section{Four-valued cross-correlation functions}\label{sec3}

In this section, we determine the cross-correlation
distribution between the binary $m$-sequence $\{u(t)\}$
and each of its short-period $d$-decimated sequences,
where $d$ is defined in \eqref{new-d}.

Throughout this section, the parameters $n$, $k$, and $s$
are as in \eqref{new-d}, and we write $r=\gcd(s,k)$. The condition $v_2(s)>v_2(k)$ implies that $\frac{s}{r}$ is even
and $\frac{k}{r}$ is odd. By Lemma \ref{lem 3},
\begin{equation}\label{3.1}
\begin{aligned}
\gcd(2^s-1,2^k+1) &= 2^r+1,\\
\gcd(2^s+1,2^k+1) &= 1,\\
\gcd(2^s-1,2^k-1) &= 2^r-1.
\end{aligned}
\end{equation}
Moreover, Lemma \ref{lem3} gives
$\gcd(d,2^n-1)=2^r+1$. Thus, there are $2^r+1$ short-period $d$-decimated sequences $\{v_l(t)\}$, indexed by $0\le l\le2^r$.
We first verify that each sequence $\{v_l(t)\}$ has the least period
$L=\frac{2^n-1}{2^r+1}$. Since $s\ne k$ and $1<s<2k$,
we have $r<k$, and hence
\[
L=(2^k-1)\frac{2^k+1}{2^r+1}>2^k-1.
\]
The multiplicative order of $2$ modulo $L$ divides
$n=2k$. Any proper divisor of $n$ is at most $k$,
so the above inequality forces this order to be $n$.
The least-period criterion stated in Section~\ref{pre} then gives
the claim.

For $a, b \in \F_{2^{n}}$, define
\begin{equation}
	W_d(a,b)=\sum_{x\in\F_{2^{n}}}(-1)^{\Tr_1^{n}(ax^{d}+bx)}.
	\label{3.2}
\end{equation}
By (\ref{2.3}), we have
\begin{equation}\label{CC}
	C_{d,l}(\tau)=W_d(a,b)-1,
\end{equation}
where $a=\alpha^l$ and $b=\alpha^\tau$.  Thus,  for each fixed $l\in\{0,1,\ldots,2^r\}$,
to determine the cross-correlation values and their
frequencies between $\{u(t)\}$ and its $l$-th
$d$-decimated sequence $\{v_l(t)\}$, it suffices
to determine the value distribution of
$W_d(\alpha^l,\alpha^\tau)$ as $\tau$ ranges
from $0$ to $2^n-2$.

\begin{prop}\label{lem 4}
	Let $d$ be given in \eqref{new-d} and $W_d(a,b)$ be defined in \eqref{3.2}. For every $a,b\in\F_{2^n}^{*}$, we have  $W_d(a,b)\in\{-2^k,0,2^k,2^{k+r}\}$.
\end{prop}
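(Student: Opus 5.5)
We will use the standard Niho polar-decomposition method. Write every $x\in\F_{2^n}^*$ uniquely as $x=zy$ with $z\in S$ and $y\in\F_{2^k}^*$. Since $d\equiv1\pmod{2^k-1}$, we have $y^d=y$. Putting $x=zy$ into \eqref{3.2} and using $\Tr_1^n(w)=\Tr_1^k(w+\bar w)$ gives
\[
W_d(a,b)=1+\sum_{z\in S}\sum_{y\in\F_{2^k}^*}(-1)^{\Tr_1^k\left(y\,(az^d+bz+\overline{az^d+bz})\right)}.
\]
The inner sum over $y$ equals $2^k-1$ if $az^d+bz+\bar a z^{-d}+\bar b z^{-1}=0$, and equals $-1$ otherwise. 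Since $|S|=2^k+1$, this yields
\[
W_d(a,b)=(N(a,b)-1)\,2^k,
\]
where $N(a,b)$ is the number of $z\in S$ with $az^d+bz+\bar a z^{-d}+\bar b z^{-1}=0$. So it suffices to show that $N(a,b)\in\{0,1,2,2^r+1\}$.

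\textbf{Step 1: reduce $z^d$ on $S$.} On $S$ the exponent $d$ acts modulo $2^k+1$. With $d_0=\frac{2^s}{2^s+1}$, we have $z^d=z^{d_0(2^k-1)+1}=z^{-2d_0+1}$, because $z^{2^k}=z^{-1}$. Next we reparametrize. By \eqref{3.1}, $\gcd(2^s+1,2^k+1)=1$, so the map $z\mapsto w=z^{1/(2^s+1)}$, with the exponent inverted modulo $2^k+1$, is a bijection of $S$. Put $z=w^{2^s+1}$. Then
\[
z^{d}=w^{(2^s+1)-2\cdot 2^s}=w^{1-2^s},\qquad z=w^{2^s+1}.
\]

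\textbf{Step 2: clear denominators.} Multiply the equation $az^d+bz+\bar a z^{-d}+\bar b z^{-1}=0$ by $w^{2^s+1}$. This gives
\[
a w^{2}+b w^{2(2^s+1)}+\bar a w^{2^{s+1}}+\bar b=0 .
\]
Every exponent here is even, so we may take the square root (Frobenius is a bijection of $S$). This reduces the equation to
\[
b' w^{2^s+1}+\bar a' w^{2^s}+a' w+\bar b'=0,
\]
where $a'=a^{1/2}$ and $b'=b^{1/2}$. Since $b\neq0$ we can divide by $b'$, which gives an equation of exactly the form in Lemma \ref{lem 1}. Because $w\mapsto z$ is a bijection of $S$, $N(a,b)$ equals the number of roots of this polynomial in $S$. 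Lemma \ref{lem 1} then gives $N(a,b)\in\{0,1,2,2^{r}+1\}$ with $r=\gcd(s,k)$. Substituting into $W_d(a,b)=(N(a,b)-1)2^k$ gives the four values $-2^k,0,2^k,2^{k+r}$.

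\textbf{Main obstacle.} The step that needs care is the exponent bookkeeping in Step 1. We must check that $z^d$ really reduces to $w^{1-2^s}$ after the substitution, in particular that $2d_0(2^s+1)\equiv 2^{s+1}\pmod{2^k+1}$ is used correctly. We also need the invertibility of $2^s+1$ modulo $2^k+1$ from \eqref{3.1}, and we must confirm that multiplying by a power of $w$ neither introduces nor loses roots in $S$; it does not, since $w\neq0$.
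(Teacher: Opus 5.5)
Your proof is correct and follows essentially the same route as the paper: the decomposition $x=zy$ with $z\in S$, $y\in\mathbb{F}_{2^k}^*$ gives $W_d(a,b)=2^k(N(a,b)-1)$, the reparametrization $z\mapsto z^{2^s+1}$ of $S$ (valid since $\gcd(2^s+1,2^k+1)=1$) turns the unit-circle equation into one of the shape $bz^{2^s+1}+\bar a z^{2^s}+az+\bar b=0$, and Lemma~\ref{lem 1} finishes. The only difference is cosmetic. The paper multiplies by $z$ and renames $z^2$ as $z$ before substituting, whereas you substitute first and then take the square root of the whole equation (valid because $X^2=0$ iff $X=0$ in characteristic $2$, not because of the bijectivity you cite), so you end up with coefficients $a^{1/2},b^{1/2}$ instead of $a,b$, which does not affect the root count.
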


\begin{proof} Note that every nonzero element $x\in\mathbb{F}_{2^n}$ can be uniquely represented as $x=zy$ for $z\in S$ and $y\in \mathbb{F}_{2^k}^*$. Then for each pair $a,b\in \mathbb{F}_{2^n}^*$, we have
\begin{align}\label{3.3}
	W_d(a,b)&=\sum_{x\in\F_{2^{n}}}(-1)^{\Tr_1^{n}(ax^{d}+bx)}\nonumber \\
&=1+\sum\limits_{z\in S}\sum\limits_{y\in\mathbb{F}_{2^k}^*}(-1)^{{\rm Tr}^n_1\left(a(zy)^d+bzy\right)}\nonumber\\
&=-2^k+\sum\limits_{z\in S}\sum\limits_{y\in\mathbb{F}_{2^k}}(-1)^{{\rm Tr}^n_1\left(az^dy+bzy\right)}\\
&=-2^k+\sum\limits_{z\in S}\sum\limits_{y\in\mathbb{F}_{2^k}}(-1)^{{\rm Tr}^k_1\left(\left(az^d+bz+(az^d+bz)^{2^k}\right)y\right)}\nonumber\\
&=-2^k+\left|\left\{z\in S: az^d+bz+(az^d+bz)^{2^k}=0\right\}\right|\cdot 2^k\nonumber
\end{align}

For $z\in S$, consider the equation
$$az^d+bz+(az^d+bz)^{2^k}=0.$$
Using $\bar z=z^{-1}$ and the expression for $d$
in \eqref{new-d}, we can rewrite this equation as
$$az^{1-2\cdot\frac{2^s}{2^s+1}}+bz+\bar{a}z^{2\cdot\frac{2^s}{2^s+1}-1}+\bar{b}z^{-1}=0.$$
Multiplying both sides by $z$ and then renaming
$z^2$ as $z$, we obtain the equivalent equation
\begin{equation}\label{3.6}
az^{1-\frac{2^s}{2^s+1}}+bz+\bar{a}z^{\frac{2^s}{2^s+1}}+\bar{b}=0.
\end{equation}
Note that $\gcd(2^s+1,2^k+1)=1$.  Replacing $z$ with $z^{2^s+1}$ in (\ref{3.6}), we have
 \begin{equation}\label{3.61}
 bz^{2^s+1}+\bar{a}z^{2^s}+az+\bar{b}=0,
 \end{equation}
which has either $0$, $1$, $2$, or $2^{\gcd(s,k)}+1=2^r+1$ solutions in $S$
by Lemma~\ref{lem 1}.
Therefore, from (\ref{3.3}), $W_d(a,b)$ equals either $-2^k$, $0$, $2^k$, or $2^{k+r}$.
\end{proof}

Proposition \ref{lem 4} gives the possible values of $W_d(a,b)$. To determine the frequencies of these values, we define
\begin{equation}
	\begin{aligned}	
		N(a,b)
		:=&\left|\left\{z\in S: bz^{2^s+1}+\bar{a}z^{2^s}+az+\bar{b}=0\right\}\right|.
	\end{aligned}	\label{Nab}
\end{equation}
 Then, $W_d(a,b)=2^k\bigl(N(a,b)-1\bigr)$. For each given $a=\alpha^l$, determining the frequencies of the values of $W_d(a,\alpha^{\tau})$ as $\tau$ runs through each integer from $0$ to $2^n-2$ is equivalent to counting, for each
$i\in\{0,1,2,2^r+1\}$, the elements
$b\in\F_{2^n}^*$ satisfying $N(a,b)=i$.  For $a=\alpha^l$, define
\begin{equation}\label{Ai}
	A_i(a):=\left|\left\{b\in\mathbb F_{2^n}^{*}:N(a,b)=i\right\}\right|.
	\end{equation}
 Therefore, to determine the correlation distribution, it suffices to determine $A_0(a)$, $A_1(a)$, $A_2(a)$, and $A_{2^r+1}(a)$. In the following, we first determine the value of $A_{2^r+1}(a)$ directly.

\begin{prop}\label{pro1} Let $a\in\F_{2^n}^*$ and $b\in\F_{2^n}$, and suppose that $\beta\in S$ is a
solution of \eqref{3.61}. Then there is a unique
$\gamma\in\F_{2^k}$ such that
\[
b=\bar{a}\beta^{-1}
  +\gamma\beta^{2^{k-1}(2^s+1)}.
\]
With this $\gamma$, the map
\[
z\longmapsto \frac{z}{z+\beta}
\]
gives a bijection between the roots of \eqref{3.61}
in $S\setminus\{\beta\}$ and the solutions in $V$
of \begin{equation}\label{affine}\left[\beta^{2^{k-1}(2^s+1)}\left(\bar a\beta^{2^s}+a\beta\right)+\gamma\right]v^{2^s}	+\gamma v+a\beta^{2^{k-1}(2^s+1)+1}+\gamma=0\end{equation}
in variable $v$, where $V$ is
defined in Lemma \ref{tr}. Consequently, for every positive integer $i\in\{1,2,2^r+1\}$,
the equation \eqref{3.61} has exactly $i$ roots in $S$
if and only if \eqref{affine} has exactly
$i-1$ solutions in $V$.
\end{prop}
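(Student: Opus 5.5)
The plan is to treat the two assertions separately: first the existence and uniqueness of $\gamma$, then the Möbius substitution $z\mapsto v=z/(z+\beta)$. Throughout, write $q=2^s$ and $\rho=\beta^{2^{k-1}(q+1)}$.

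\emph{Existence and uniqueness of $\gamma$.} Since $\beta\in S$ we have $\bar\beta=\beta^{-1}$. Hence
\[
\rho^2=\beta^{2^k(q+1)}=\bar\beta^{\,q+1}=\beta^{-(q+1)},
\qquad \bar\rho=\rho^{-1},
\]
and in particular $\rho\in S$ and $\rho\neq 0$. Uniqueness is then immediate: $b$ determines $\gamma=(b+\bar a\beta^{-1})\rho^{-1}$. For existence I must show that this $\gamma$ lies in $\F_{2^k}$, i.e.\ $\bar\gamma=\gamma$. I compute $\bar\gamma=(\bar b+a\beta)\rho$. Then $\bar\gamma=\gamma$ is equivalent to
\[
(\bar b+a\beta)\rho^2=b+\bar a\beta^{-1},
\]
that is, $\bar b+a\beta=b\beta^{q+1}+\bar a\beta^{q}$. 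This is exactly the statement that $\beta$ is a root of \eqref{3.61}.

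\emph{The bijection $S\setminus\{\beta\}\to V$.} For $z\in S\setminus\{\beta\}$ set $v=z/(z+\beta)$. Using $\bar z=z^{-1}$ and $\bar\beta=\beta^{-1}$,
\[
\bar v=\frac{z^{-1}}{z^{-1}+\beta^{-1}}=\frac{\beta}{z+\beta},
\]
so $\Tr_k^n(v)=v+\bar v=1$ and $v\in V$. For the inverse map, note first that every $v\in V$ satisfies $v\notin\{0,1\}$, since $\Tr_k^n(0)=\Tr_k^n(1)=0$. So I can define $z=\beta v/(v+1)$. Using $\bar v=v+1$,
\[
\bar z=\frac{\beta^{-1}(v+1)}{v}=z^{-1},
\]
so $z\in S$. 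Moreover $z\neq\beta$ because $v\neq v+1$. The two maps are mutually inverse, so $z\mapsto v$ is a bijection $S\setminus\{\beta\}\to V$.

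\emph{Transforming the equation.} Next I substitute $z=\beta v/(v+1)$ into \eqref{3.61}, replace $b$ by $\bar a\beta^{-1}+\gamma\rho$, and use $\bar b=\bar\gamma\bar\rho+a\beta=\gamma\rho^{-1}+a\beta$. Multiplying through by $(v+1)^{q+1}$ gives a polynomial of the form $c_2v^{q+1}+c_1v^q+c_0v+c_{-1}$. The key cancellation I expect is that the coefficient of $v^{q+1}$ equals the left side of \eqref{3.61} evaluated at $z=\beta$. This is forced: $v=1$ corresponds to $z=\infty$, and the leading coefficient measures the value at the point $z=\beta$, which is $v=\infty$. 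Hence this coefficient vanishes, and the equation drops to the affine shape $c_1v^q+c_0v+c_{-1}=0$. Multiplying by a suitable power of $\rho$ (using $\rho^{-2}=\beta^{q+1}$), the coefficients should become
\[
c_1=\rho\bigl(\bar a\beta^{q}+a\beta\bigr)+\gamma,\qquad c_0=\gamma,\qquad c_{-1}=a\beta\rho+\gamma,
\]
which is exactly \eqref{affine}. This coefficient bookkeeping, namely choosing the right normalizing factor and checking that the $\gamma$-terms land as $\gamma v^{q}+\gamma v+\gamma$, is the step I expect to be the main obstacle. I would organize it by expanding $b\beta^{q+1}v^{q+1}+\bar a\beta^q v^q(v+1)+a\beta v(v+1)^q+\bar b(v+1)^{q+1}$ term by term, since $(v+1)^q=v^q+1$ keeps the expansion short.

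\emph{The count.} Because the map is a bijection and $\beta$ itself is one root of \eqref{3.61} in $S$, the number of roots of \eqref{3.61} in $S$ equals one plus the number of solutions of \eqref{affine} in $V$. This yields the final claim for $i\in\{1,2,2^r+1\}$.
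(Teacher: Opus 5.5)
Your proposal is correct and follows essentially the paper's route: the same map $z\mapsto z/(z+\beta)$ with inverse $v\mapsto\beta v/(v+1)$, and the same substitution. The differences are cosmetic: you get $\gamma$ by checking $\bar\gamma=\gamma$ directly (equivalent to $\beta$ being a root) instead of writing the solutions of $\bar x+\beta^{2^s+1}x=\bar a\beta^{2^s}+a\beta$ as a particular solution plus $2^k$ homogeneous ones, and you substitute into \eqref{3.61} itself rather than into the difference equation \eqref{3.63}, with the vanishing $v^{2^s+1}$ coefficient doing the work of that subtraction. The bookkeeping you flagged closes at once: the expansion leaves $(\bar a\beta^{2^s}+\bar b)v^{2^s}+(a\beta+\bar b)v+\bar b=0$, and inserting $\bar b=\gamma\rho^{-1}+a\beta$ and multiplying by $\rho=\beta^{2^{k-1}(2^s+1)}$ gives \eqref{affine} exactly, with no need for $\rho^{-2}=\beta^{2^s+1}$.
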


\begin{proof} Since $\beta\in S$ is a solution of \eqref{3.61}, we have $b\beta^{2^s+1}+\bar b=\bar a\beta^{2^s}+a\beta$. Consider the equation
\begin{equation}\label{3.62}\bar{x}+\beta^{2^s+1}x=\bar a\beta^{2^s}+a\beta
\end{equation}in variable $x$. The element $x=\bar a\beta^{-1}$ is a solution of \eqref{3.62}. It can be verified that for every $\gamma\in\F_{2^k}$, the element $x=\gamma\beta^{2^{k-1}(2^s+1)}$
	satisfies the homogeneous equation $\bar{x}+\beta^{2^s+1}x=0$ due to
	\[
	\begin{aligned}
		&\overline{\gamma\beta^{2^{k-1}(2^s+1)}}+\beta^{2^s+1}\cdot\gamma\beta^{2^{k-1}(2^s+1)}\\
		&=\gamma\beta^{-2^{k-1}(2^s+1)}+\gamma\beta^{(2^s+1)(1+2^{k-1})}\\
		&=\gamma\beta^{-2^{k-1}(2^s+1)}+\gamma\beta^{(2^s+1)(-2^{k-1}+2^k+1)}\\
		&=\gamma\beta^{-2^{k-1}(2^s+1)}\left(1+\beta^{(2^s+1)(2^k+1)}\right)\\
		&=\gamma\beta^{-2^{k-1}(2^s+1)}(1+1)\\
		&=0.
	\end{aligned}
	\]
Since $\beta\ne0$, the equation $\bar{x}+\beta^{2^s+1}x=0$ has
$2^k$ distinct solutions.
On the other hand, the equation \eqref{3.62} has at most
$2^k$ solutions in $\F_{2^n}$.
Therefore, every solution of \eqref{3.62} can be
written uniquely as
\[
x=\bar a\beta^{-1}
+\gamma\beta^{2^{k-1}(2^s+1)},
\qquad \gamma\in\F_{2^k}.
\]
In particular, for every $b\in\F_{2^n}$, there is a unique $\gamma\in\F_{2^k}$
such that
\[
b=\bar a\beta^{-1}
+\gamma\beta^{2^{k-1}(2^s+1)}.
\]

Let $z$ be an element in the set $S\setminus\{\beta\}$, and define a mapping
$$
\begin{array}{rcl}
\Psi: S\setminus\{\beta\}&\longrightarrow &V\\
z\,\,\,\,\,\,\,\,\,&\mapsto &\frac{z}{z+\beta}.
\end{array}
$$
Note that $z, \beta\in S$, and then we have $z^{2^k}=z^{-1}$ and $\beta^{2^k}=\beta^{-1}$. This gives
	$$(\Psi(z))^{2^k}=\frac{z^{-1}}{z^{-1}+\beta^{-1}}=\frac{\beta}{z+\beta}=\Psi(z)+1,$$
i.e., $(\Psi(z))^{2^k}+\Psi(z)=1$. Therefore, $\Psi$ is indeed a mapping from $S\setminus\{\beta\}$ to $V$. Further, if $z_1,z_2$ are two elements in $S\setminus\{\beta\}$
satisfy $\Psi(z_1)=\Psi(z_2)$, that is to say,
$$\frac{z_1}{z_1+\beta}=\frac{z_2}{z_2+\beta},$$
then we have $z_1=z_2$. This shows that $\Psi$ is injective. For every element $v\in V$, we have $v\ne 0,1$ because of $v^{2^k}=v+1$, and it can be verified that
	\[
	\left(\frac{\beta v}{v+1}\right)^{2^k}=\frac{\beta^{-1} v^{2^k}}{v^{2^k}+1}=\frac{ v+1}{\beta v}=\left(\frac{\beta v}{v+1}\right)^{-1},
	\]
i.e., $\frac{\beta v}{v+1}\in S\setminus\{\beta\}$, and
\[
	\Psi\left(\frac{\beta v}{v+1}\right)=\frac{\frac{\beta v}{v+1}}{\frac{\beta v}{v+1}+\beta}=v.
	\]
This shows that $\Psi$ is surjective. Thus, $\Psi$ is a bijection from $S\setminus\{\beta\}$ to $V$.

 Under this bijection, the solutions of \eqref{3.61}
in $S\setminus\{\beta\}$ correspond precisely to
the solutions in $V$ of the linearized equation
derived below.	
	
Let $z$ be a solution of \eqref{3.61} different from  $\beta$. Then from $bz^{2^s+1}+\bar{a}z^{2^s}+az+\bar{b}=0$ and $b\beta^{2^s+1}+\bar{a}\beta^{2^s}+a\beta+\bar{b}=0$, we have
\begin{equation}\label{3.63}
b\left(z^{2^s+1}+\beta^{2^s+1}\right)+\bar a\left(z^{2^s}+\beta^{2^s}\right)+a(z+\beta)=0.
\end{equation}
For every solution $z$ of \eqref{3.61} different from $\beta$, there is a unique element $v\in V$ such that $\Psi(z)=v$ and then $z=\frac{\beta v}{v+1}$.
Substituting $z=\frac{\beta v}{v+1}$ into \eqref{3.63}, we obtain
	\[
\begin{aligned}
	&\,b\left(\left(\frac{\beta v}{v+1}\right)^{2^s+1}+\beta^{2^s+1}\right)+\bar a\left(
	\left(\frac{\beta v}{v+1}\right)^{2^s}+\beta^{2^s}\right)+a\left(\left(\frac{\beta v}{v+1}\right)+\beta\right)=0\\
\Leftrightarrow&\,b\left(\beta^{2^s+1}\left(v^{2^s}+v+1\right)\right)+\bar a\left(
	\beta^{2^s}(v+1)\right)+a\left(\beta\left(v^{2^s}+1\right)\right)=0\\
\Leftrightarrow&\,\left(\bar a\beta^{-1}+\gamma\beta^{2^{k-1}(2^s+1)}\right)\left(\beta^{2^s+1}\left(v^{2^s}+v+1\right)\right)+\bar a\left(
	\beta^{2^s}(v+1)\right)+a\left(\beta\left(v^{2^s}+1\right)\right)=0\\
\Leftrightarrow&\,\bar a\beta^{2^s}v^{2^s}+\gamma\beta^{(2^{k-1}+1)(2^s+1)}v^{2^s}+\gamma\beta^{(2^{k-1}+1)(2^s+1)}v+\gamma\beta^{(2^{k-1}+1)(2^s+1)}+a\beta v^{2^s}+a\beta=0\\
\Leftrightarrow&\,\left(\bar a\beta^{2^s}+\gamma\beta^{(2^{k-1}+1)(2^s+1)}+a\beta\right) v^{2^s}
+\gamma\beta^{(2^{k-1}+1)(2^s+1)}v+\gamma\beta^{(2^{k-1}+1)(2^s+1)}+a\beta=0\\
\Leftrightarrow&\left[\beta^{2^{k-1}(2^s+1)}\left(\bar a\beta^{2^s}+a\beta\right)+\gamma\right]v^{2^s}	+\gamma v+a\beta^{2^{k-1}(2^s+1)+1}+\gamma=0,
	\end{aligned}
\]
where the third equation holds since $b=\bar a\beta^{-1}+\gamma\beta^{2^{k-1}(2^s+1)}$.  The last equivalence follows by multiplying by
$\beta^{2^{k-1}(2^s+1)}$ and using $\beta^{2^k+1}=1$.

The bijection $\Psi$ and the above calculations show
that when $N(a,b)\ge 1$, i.e., $N(a,b)\in \{1, 2, 2^r+1\}$,  one has
$$N(a,b)=1+\left|\{v\in V:v\text{ satisfies (3.9)}\}\right|,$$   which implies the desired result.

\end{proof}

Fix $a\in\F_{2^n}^*$.
To determine $A_{2^r+1}(a)$, we need to count
the elements $b\in\F_{2^n}^*$ such that
the equation \eqref{3.61} has exactly $2^r+1$
solutions in $S$.
For each $\beta\in S$, define
\begin{equation}\label{B}
B_{2^r+1}(a,\beta):=
\left|\left\{
b\in\F_{2^n}^*:
N(a,b)=2^r+1
\text{ and $\beta$ is a solution of \eqref{3.61}}
\right\}\right|.
\end{equation}
Thus, for fixed $a$ and $\beta$, $B_{2^r+1}(a,\beta)$
is the number of nonzero elements $b$ such that
\eqref{3.61} has exactly $2^r+1$
solutions in $S$, including $\beta$.

To relate $A_{2^r+1}(a)$ to $B_{2^r+1}(a,\beta)$, we count the ordered pairs
$(b,\beta)\in\F_{2^n}^*\times S$ such that
$N(a,b)=2^r+1$ and $\beta$ is a solution of
\eqref{3.61}.
For each fixed $b$ counted by $A_{2^r+1}(a)$,
there are exactly $2^r+1$ choices of $\beta$.
On the other hand, for each fixed $\beta\in S$,
there are exactly $B_{2^r+1}(a,\beta)$ choices of $b$.
Counting these pairs in two ways gives
\begin{equation}\label{AB}
(2^r+1)A_{2^r+1}(a)
=\sum_{\beta\in S}B_{2^r+1}(a,\beta).
\end{equation} It therefore suffices to evaluate $B_{2^r+1}(a,\beta)$
for each $\beta\in S$.

To evaluate $B_{2^r+1}(a,\beta)$, fix $\beta\in S$.
For fixed $a$ and $\beta$, Proposition \ref{pro1}
shows that
\begin{equation}\label{bb}
b=\bar a\beta^{-1}
+\gamma\beta^{2^{k-1}(2^s+1)},
\quad \gamma\in\F_{2^k},
\end{equation}
gives a bijection between the parameters
$\gamma\in\F_{2^k}$ and the elements
$b\in\F_{2^n}$ such that $\beta$ is a solution
of \eqref{3.61}.
For each corresponding pair $(b,\gamma)$,
Proposition \ref{pro1} further shows that \eqref{3.61} has exactly $2^r+1$
solutions in $S$ if and only if
\eqref{affine} has exactly $2^r$ solutions in $V$.
Therefore, determining $B_{2^r+1}(a,\beta)$ reduces to
counting the parameters $\gamma\in\F_{2^k}$
such that the corresponding $b$ is nonzero
and the  equation \eqref{affine} has exactly $2^r$
solutions in $V$.

\begin{prop}\label{lem5}
	For $a\in\F_{2^n}^*$ and $\beta\in S$, we have
	\[
	B_{2^r+1}(a,\beta)=\begin{cases}\displaystyle
		\frac{2^k-2^r}{2^r},
		& if \,\,\,
		\bar a\beta^{2^s}+a\beta=0,
		\\[10pt]
		\displaystyle
		\frac{2^k-2^r}{2^r(2^r-1)},
		&
		if\,\,\, \bar a\beta^{2^s}+a\beta\ne0.
	\end{cases}
	\]
\end{prop}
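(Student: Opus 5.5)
We count directly the parameters $\gamma\in\F_{2^k}$ described after \eqref{bb}. By Proposition \ref{pro1}, $B_{2^r+1}(a,\beta)$ is the number of $\gamma\in\F_{2^k}$ for which two conditions hold: the element $b=\bar a\beta^{-1}+\gamma\beta^{2^{k-1}(2^s+1)}$ is nonzero, and \eqref{affine} has exactly $2^r$ solutions in $V$.

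Abbreviate
\[
C=\beta^{2^{k-1}(2^s+1)}\bigl(\bar a\beta^{2^s}+a\beta\bigr),\qquad D=a\beta^{2^{k-1}(2^s+1)+1},
\]
so that \eqref{affine} reads $(C+\gamma)v^{2^s}+\gamma v+D+\gamma=0$. A first routine step is to check, using $\beta\bar\beta=1$, that $C\in\F_{2^k}$. I also expect the relevant combination of $D$ with $C$ to lie in $\F_{2^k}$; in particular $D\in\F_{2^k}$ when $C=0$. Note also that $C=0$ exactly when $\bar a\beta^{2^s}+a\beta=0$, which splits the proof into the two cases of the statement.

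\emph{Case $C=0$.} Here $\gamma=0$ is impossible, since it would force $D=0$, contradicting $a\ne0$. So the equation becomes $v^{2^s}+v=1+D\gamma^{-1}$ with the right-hand side in $\F_{2^k}$. By Lemma \ref{tr} this has either $0$ or exactly $2^r$ solutions in $V$. It has $2^r$ solutions iff $\Tr_r^k(1+D\gamma^{-1})=0$. Since $\frac{k}{r}$ is odd, $\Tr_r^k(1)=1$, so the condition is $\Tr_r^k(D\gamma^{-1})=1$. As $\gamma$ runs over $\F_{2^k}^*$, $D\gamma^{-1}$ runs over $\F_{2^k}^*$, and exactly $2^{k-r}$ values have trace $1$. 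It remains to exclude $b=0$. In this case $b=0$ does make $\beta$ a root of \eqref{3.61}, because $\bar a\beta^{2^s}+a\beta=0$. Moreover the equation $\bar a z^{2^s}+az=0$ has $\gcd(2^s-1,2^k+1)=2^r+1$ roots in $S$ by \eqref{3.1}. Hence exactly one of the counted $\gamma$ gives $b=0$, and the count is $2^{k-r}-1=\frac{2^k-2^r}{2^r}$.

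\emph{Case $C\ne0$.} Now $b=0$ cannot occur, since it would force $\bar a\beta^{2^s}+a\beta=0$. The value $\gamma=C$ makes the equation linear in $v$, giving at most one solution, so it is discarded. For $\gamma\ne C$, I would pick $v_0\in V$ with $v_0^{2^s}=v_0$, which exists by the proof of Lemma \ref{tr}, and write $v=v_0+x$ with $x\in\F_{2^k}$. This turns \eqref{affine} into an affine $2^s$-linearized equation over $\F_{2^k}$ in $x$, of the form
\[
(C+\gamma)x^{2^s}+\gamma x=E(\gamma),
\]
with $E$ affine in $\gamma$. The plan is then to rescale $x$ by a suitable element of $\F_{2^k}^*$ that depends on $\gamma$, so that the equation takes the shape \eqref{x}, namely $X^{2^s}+X=\delta c^{2^s}+c$. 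Here $\delta\in\F_{2^k}$ should be a fixed element determined by $a,\beta$ with $\Tr_r^k(\delta)=0$, and $c\in\F_{2^k}^*$ should be a parameter attached to $\gamma$.

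Under this normalization, having $2^r$ solutions is equivalent to two requirements. First, the homogeneous part must have $2^r$ roots, which forces $\gamma/(C+\gamma)$ to be a suitable $(2^s-1)$-th power, i.e. a $(2^r-1)$-th power. Second, the equation must be solvable. Corollary \ref{y} then gives exactly $2^{k-r}-1$ admissible values of $c$. I expect the correspondence $\gamma\mapsto c$ to be $(2^r-1)$-to-one, the fibres being orbits under multiplication by $\F_{2^r}^*$, which is exactly the ambiguity in extracting the $(2^r-1)$-th root. This would give
\[
B_{2^r+1}(a,\beta)=\frac{2^{k-r}-1}{2^r-1}=\frac{2^k-2^r}{2^r(2^r-1)}.
\]

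The main obstacle is this second case: finding the explicit substitution that converts \eqref{affine} into the form \eqref{x}, proving that the resulting $\delta$ lies in $\F_{2^k}$ with $\Tr_r^k(\delta)=0$, and verifying that the correspondence between $\gamma$ and $c$ has fibres of size exactly $2^r-1$. I would first try writing $\gamma/(C+\gamma)=\lambda^{2^s-1}$ and substituting $x=\lambda X$. I would then read off $c$ as a multiple of $\lambda$ and $\delta$ from $C$, $D$ and $v_0$, and use $\Tr_r^k(v^{2^s}+v)=0$ from Lemma \ref{tr} to obtain the trace condition on $\delta$.
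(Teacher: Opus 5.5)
Your outline follows the paper's proof. You use the same split on whether $C$ vanishes. In Case 1 you reduce to counting $\gamma$ with $\Tr_r^k(D\gamma^{-1})=1$ and then remove the one $\gamma$ with $b=0$; your justification via $N(a,0)=2^r+1$ is a valid substitute for the paper's direct check that $\gamma=D$ makes the right-hand side vanish. In Case 2 you discard $\gamma=C$, extract a $(2^s-1)$-th root and appeal to Corollary~\ref{y}, as the paper does.

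The substitution you flag as the obstacle is easy once you note $C=D+\bar D=\Tr_k^n(D)$. This gives $D\in\F_{2^k}$ when $C=0$, and $D/C=\frac{a\beta}{\bar a\beta^{2^s}+a\beta}\in V$ when $C\ne0$. Use $D/C$ instead of $v_0$ as base point. Then $v=D/C+u$ turns \eqref{affine} into $(C+\gamma)u^{2^s}+\gamma u=(C+\gamma)\delta+\gamma$ with $\delta=(D/C)^{2^s}+D/C$, and $\Tr_r^k(\delta)=0$ by Lemma~\ref{tr}. Writing $\gamma/(C+\gamma)=\lambda^{2^s-1}$ and $u=\lambda X$ gives $X^{2^s}+X=\delta c^{2^s}+c$ with $c=\lambda^{-1}$. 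With $v_0$ as base point you would instead get $X^{2^s}+X=wc^{2^s}+wc+c$, where $w=v_0+D/C\in\F_{2^k}$, and you would need the extra translation $X=X'+wc$.

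The one real gap is the final division by $2^r-1$.
\begin{itemize}
\item The map $\gamma\mapsto\gamma/(C+\gamma)$ sends $\F_{2^k}\setminus\{C\}$ bijectively onto $\F_{2^k}\setminus\{1\}$. So an admissible $c$ comes from some parameter $\gamma$ only if $c^{2^s-1}\ne1$, i.e.\ $c\notin\F_{2^r}^*$.
\item Admissibility is invariant under $c\mapsto\mu c$ for $\mu\in\F_{2^r}^*$. Hence $\F_{2^r}^*$ is either wholly admissible or not admissible at all.
\item If it were wholly admissible, your count would come out as $\frac{2^{k-r}-1}{2^r-1}-1$.
\end{itemize}
You therefore have to show that no $c\in\F_{2^r}^*$ is admissible. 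This holds because $\Tr_r^k(\delta c^{2^s}+c)=c\bigl(\Tr_r^k(\delta)+\Tr_r^k(1)\bigr)=c\neq0$, using $\Tr_r^k(1)=1$ for $k/r$ odd. This is exactly the paper's remark that every admissible $\eta$ satisfies $\eta^{2^s-1}\ne1$. With it, the $2^{k-r}-1$ admissible $c$ split into $\F_{2^r}^*$-orbits of size $2^r-1$, each lying over exactly one admissible $\gamma$, and your formula follows.
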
	
\begin{proof}	
  The coefficient of $v^{2^s}$ in \eqref{affine} is $\beta^{2^{k-1}(2^s+1)}	\left(\bar a\beta^{2^s}+a\beta	\right)+\gamma$. We first show that this coefficient belongs to $\F_{2^k}$.
Since $\beta\in S$, we have $\bar\beta=\beta^{-1}$.
Moreover,
\[
-2^{k-1}(2^s+1)-1
\equiv 2^{k-1}(2^s+1)+2^s
\pmod{2^k+1}.
\]
Hence
\begin{equation}\label{xaddeq1}
\begin{aligned}
\beta^{2^{k-1}(2^s+1)}
\left(\bar a\beta^{2^s}+a\beta\right)
&=\overline{a\beta^{2^{k-1}(2^s+1)+1}}
  +a\beta^{2^{k-1}(2^s+1)+1}\\
&=\Tr_k^n\left(a\beta^{2^{k-1}(2^s+1)+1}\right)
\in\F_{2^k}.
\end{aligned}
\end{equation}
Since $\gamma\in\F_{2^k}$, the coefficient of
$v^{2^s}$ also belongs to $\F_{2^k}$.
We now evaluate $B_{2^r+1}(a,\beta)$ in the following
two cases.
	
\noindent\emph{Case 1:
		$\bar a\beta^{2^s}+a\beta=0$.}

	In this case, the equation \eqref{affine} becomes $\gamma v^{2^s}+\gamma v+a\beta^{2^{k-1}(2^s+1)+1}+\gamma=0$ and it can be written as $\gamma\left(v^{2^s}+v\right)+a\beta^{2^{k-1}(2^s+1)+1}+\gamma=0$.
	Since $a\ne0$ and $\beta\ne0$, we have $a\beta^{2^{k-1}(2^s+1)+1}\ne 0$ and then $\gamma\ne0$. Thus, we have
	\begin{equation}\label{vr}
		v^{2^s}+v=1+\frac{a\beta^{2^{k-1}(2^s+1)+1}}{\gamma}.
	\end{equation}
Moreover, in this case, the preceding identity \eqref{xaddeq1} gives
\[
\Tr_k^n\left(a\beta^{2^{k-1}(2^s+1)+1}\right)=0.
\]
Thus,
$a\beta^{2^{k-1}(2^s+1)+1}\in\F_{2^k}^*$,
since $a\ne0$ and $\beta\ne0$.

By Lemma \ref{tr}, for given $a, \beta, \gamma$, if the equation \eqref{vr} in variable $v$ is solvable, then it has exactly $2^r$ solutions in $V$.
Now count the elements
$\gamma\in\F_{2^k}^*$ such that \eqref{vr}
has a solution in $V$.

Define a mapping $\Upsilon$ from $\F_{2^k}^{*}$
to $\F_{2^k}\setminus\{1\}$ by
$$
\begin{array}{rcl}
\Upsilon: \F_{2^k}^{*}&\longrightarrow &\F_{2^k}\setminus\{1\}\\
\gamma\,\,\,\,&\mapsto &1+\dfrac{a\beta^{2^{k-1}(2^s+1)+1}}{\gamma}.
\end{array}
$$
Since $a\beta^{2^{k-1}(2^s+1)+1}\in\F_{2^k}^*$,
the map $\Upsilon$ is a bijection from
$\F_{2^k}^*$ onto $\F_{2^k}\setminus\{1\}$,
with inverse
\[
y\mapsto
\frac{a\beta^{2^{k-1}(2^s+1)+1}}{y+1}.
\]
Since $\Tr_r^k(1)=1$, we have $1\notin\left\{v^{2^s}+v:v\in V\right\}$ by Lemma \ref{tr}.
From the bijectivity of $\Upsilon$ and \eqref{car}, the equation
\eqref{vr} in variable $v$ has exactly $2^r$ solutions in $V$ for precisely
$2^{k-r}$ values of $\gamma\in\F_{2^k}^{*}$.

By Proposition \ref{pro1} and \eqref{bb},
\[
\begin{aligned}
	b=\bar a\beta^{-1}+\gamma\beta^{2^{k-1}(2^s+1)}=0
	&\Longleftrightarrow
	\gamma=\bar a\beta^{-2^{k-1}(2^s+1)-1}.
\end{aligned}
\]
Since $a\beta^{2^{k-1}(2^s+1)+1} \in \F_{2^k}^*$, $\bar a\beta^{-2^{k-1}(2^s+1)-1}
=a\beta^{2^{k-1}(2^s+1)+1}$. Then in this case the equation  \eqref{vr} reduces to \[	v^{2^s}+v=1+\frac{a\beta^{2^{k-1}(2^s+1)+1}}{\gamma}=1+\frac{	a\beta^{2^{k-1}(2^s+1)+1}}{\bar a\beta^{-2^{k-1}(2^s+1)-1}}=0.\]
By Lemma~\ref{tr}, the equation $v^{2^s}+v=0$ has exactly $2^r$ solutions in $V$. Hence this value of $\gamma$ corresponding to $b=0$ is included in the preceding count of $2^{k-r}$ values. However, $b=0$ should be excluded since $B_{2^r+1}(a,\beta)$ counts only nonzero $b$. Consequently, we have
\[
B_{2^r+1}(a,\beta)=2^{k-r}-1=\frac{2^k-2^r}{2^r}.
\]

\medskip \noindent\emph{Case 2: $\bar a\beta^{2^s}+a\beta\ne0$.}

In this case, using \eqref{xaddeq1}, we obtain
$$
\Tr_k^n\left(
\frac{a\beta}{\bar a\beta^{2^s}+a\beta}
\right)
=
\frac{
\Tr_k^n\left(a\beta^{2^{k-1}(2^s+1)+1}\right)
}{
\beta^{2^{k-1}(2^s+1)}
(\bar a\beta^{2^s}+a\beta)
}=1.$$
Thus,
$\dfrac{a\beta}{\bar a\beta^{2^s}+a\beta}\in V$, where the set $V$ is defined in \eqref{V}. Further, by Lemma \ref{tr}, we have
$\left(\frac{a\beta}{\bar a\beta^{2^s}+a\beta}\right)^{2^s}+\frac{a\beta}{\bar a\beta^{2^s}+a\beta}\in \F_{2^k}$ and ${\rm Tr}_r^k\left(\left(\frac{a\beta}{\bar a\beta^{2^s}+a\beta}
\right)^{2^s}+\frac{a\beta}{\bar a\beta^{2^s}+a\beta}\right)=0$.

Since $\dfrac{a\beta}{\bar a\beta^{2^s}+a\beta}\in V$,
the definition of $V$ shows that every $v\in V$
can be written uniquely as
\begin{equation}\label{v}
v=u+\frac{a\beta}{\bar a\beta^{2^s}+a\beta},
\quad u\in\F_{2^k}.
\end{equation}

Set $\delta=\left(\frac{a\beta}{\bar a\beta^{2^s}+a\beta}\right)^{2^s}
		+\frac{a\beta}{\bar a\beta^{2^s}+a\beta}$. Substituting \eqref{v} into \eqref{affine} gives
\begin{equation}\label{*}
		\begin{array}{lcl}
		&&\biggl[\beta^{2^{k-1}(2^s+1)}\bigl(\bar a\beta^{2^s}+a\beta\bigr)+\gamma\biggr]
		\biggl(u+\frac{a\beta}{\bar a\beta^{2^s}+a\beta}\biggr)^{2^s}
		+\gamma \biggl(u+\frac{a\beta}{\bar a\beta^{2^s}+a\beta}\biggr)\\
		&&\,\,\,\,\,\,\,\,\,\,\,\,\,\,\,\,\,\,\,\,\,\,\,\,\,\,\,\,\,\,\,\,\,\,\,\,\,\,\,\,\,\,\,\,\,\,
\,\,\,\,\,\,\,\,\,\,\,\,\,\,\,\,\,\,\,\,\,\,\,\,\,\,\,\,\,\,\,\,\,\,\,\,\,\,\,\,\,\,\,\,\,\,\,\,\,\,\,\,
\,\,\,\,\,\,\,\,\,\,\,\,\,\,\,\,\,\,\,\,\,\,\,\,\,\,\,\,\,\,+a\beta^{2^{k-1}(2^s+1)+1}+\gamma=0 \\
		\Leftrightarrow
		&&\biggl[\beta^{2^{k-1}(2^s+1)}\bigl(\bar a\beta^{2^s}+a\beta\bigr)+\gamma\biggr]u^{2^s}+\gamma u
		\\
		&&\,\,\,\,\,\,\,\,\,\,\,\,\,\,\,\,\,\,\,\,\,\,\,\,\,\,\,\,\,\,\,\,\,\,\,\,\,\,\,\,\,\,\,\,\,\,\,\,\,\,\,\,\,\,\,\,\,\,\,\,\,\,\,\,\,\,\,=\beta^{2^{k-1}(2^s+1)}\bigl(\bar a\beta^{2^s}+a\beta\bigr)
		\delta+\gamma(\delta+1) \\
		\Leftrightarrow
		&&\biggl[\beta^{2^{k-1}(2^s+1)}\bigl(\bar a\beta^{2^s}+a\beta\bigr)+\gamma\biggr]u^{2^s}+\gamma u
		=\biggl(\beta^{2^{k-1}(2^s+1)}\bigl(\bar a\beta^{2^s}+a\beta\bigr)+\gamma\biggr)\delta+\gamma.
	\end{array}
\end{equation} By \eqref{xaddeq1}, we have
\[
\beta^{2^{k-1}(2^s+1)}
\bigl(\bar a\beta^{2^s}+a\beta\bigr)\in\F_{2^k}.
\]
Since $\delta,\gamma\in\F_{2^k}$, all coefficients
of the equation \eqref{*} in $u$ belong to $\F_{2^k}$. We now count the parameters $\gamma\in\F_{2^k}$
such that the equation \eqref{*} has exactly $2^r$
solutions in $\F_{2^k}$.

When
$
\gamma=\beta^{2^{k-1}(2^s+1)}\left(\bar a\beta^{2^s}+a\beta	\right),
$
the equation \eqref{*} reduces to
\[
\gamma u=\gamma,
\]
and admits the unique solution $u=1$ since $\gamma\neq 0$. Thus, the element $b$ corresponding to this value
of $\gamma$ is not counted by $B_{2^r+1}(a,\beta)$.

When $\gamma\ne \beta^{2^{k-1}(2^s+1)}\left(\bar a\beta^{2^s}+a\beta\right)$, the equation \eqref{*} becomes
\begin{equation}\label{ne}
	u^{2^s}+\frac{\gamma}{\beta^{2^{k-1}(2^s+1)}\left(\bar a\beta^{2^s}+a\beta\right)+\gamma}u
	=
	\delta+\frac{\gamma}{\beta^{2^{k-1}(2^s+1)}\left(\bar a\beta^{2^s}+a\beta\right)+\gamma}.
\end{equation}
Consider the equation
\begin{equation}\label{u}
	u^{2^s}+\frac{\gamma}{\beta^{2^{k-1}(2^s+1)}\left(\bar a\beta^{2^s}+a\beta\right)+\gamma}u=0.
\end{equation}
To ensure that \eqref{ne} has exactly $2^r$
solutions in $\F_{2^k}$, the homogeneous equation
\eqref{u} must have exactly $2^r$ solutions,
and \eqref{ne} must have a solution in $\F_{2^k}$.
When $u\ne0$, the equation \eqref{u} is equivalent to $u^{2^s-1}=\frac{\gamma}{\beta^{2^{k-1}(2^s+1)}\left(\bar a\beta^{2^s}+a\beta\right)+\gamma}$. Since $\gcd(2^s-1,2^k-1)=2^r-1$ by \eqref{3.1},
the equation \eqref{u} has $2^r$ solutions in $\mathbb{F}_{2^k}$ precisely if
$
\frac{\gamma}{\beta^{2^{k-1}(2^s+1)}\left(\bar a\beta^{2^s}+a\beta\right)+\gamma}
$
is a $(2^s-1)$-th power in $\mathbb{F}_{2^k}^*$, that is to say,
\[
\frac{\gamma}{\beta^{2^{k-1}(2^s+1)}\left(\bar a\beta^{2^s}+a\beta\right)+\gamma}=\eta^{2^s-1}
\]
for some $\eta\in\mathbb{F}_{2^k}^{*}$. Using this relation and substituting $u=x\eta$ with $x \in \mathbb{F}_{2^k}$ into \eqref{ne}, we obtain
\[x^{2^s}+x=\delta \eta^{-2^s}+\eta^{-1}=\delta (\eta^{-1})^{2^s}+\eta^{-1}, \]
where $\delta=\left(\frac{a\beta}{\bar a\beta^{2^s}+a\beta}\right)^{2^s}
		+\frac{a\beta}{\bar a\beta^{2^s}+a\beta}$.
Since $\eta\ne0$, this substitution preserves
the number of solutions in $\F_{2^k}$. By Lemma \ref{tr}, the resulting equation
has exactly $2^r$ solutions in $\F_{2^k}$
if and only if
\begin{equation} \label{trace}
\Tr_r^k\left(\delta\eta^{-2^s}+\eta^{-1}\right)=0.
\end{equation}
Since $\Tr_r^k(\delta)=0$, Corollary 1 shows that exactly $2^{k-r}-1$ elements
$\eta\in\F_{2^k}^*$ satisfy  \eqref{trace}.

Every such $\eta$ satisfies $\eta^{2^s-1}\ne1$.
Otherwise $\eta\in\F_{2^r}^*$ and the trace in \eqref{trace}
would equal $\eta^{-1}(\Tr_r^k(\delta)+\Tr_r^k(1))=\eta^{-1}\ne0$,
since $k/r$ is odd. Thus, each such $\eta$ determines a unique parameter
\[
\gamma=
\frac{\eta^{2^s-1}}{1+\eta^{2^s-1}}\,
\beta^{2^{k-1}(2^s+1)}
(\bar a\beta^{2^s}+a\beta).
\]In particular, this value of $\gamma$ is never
equal to
$\beta^{2^{k-1}(2^s+1)}(\bar a\beta^{2^s}+a\beta)$,
since $\eta^{2^s-1}/(1+\eta^{2^s-1})\ne1$.
These elements $\eta$ give all the parameters $\gamma$
such that \eqref{ne} has exactly $2^r$ solutions in $\F_{2^k}$.

It remains to find how many distinct elements
$\eta$ result in the same parameter $\gamma$. Two elements give the same $\gamma$ if and only
if they give the same power value $\eta^{2^s-1}$. Let $\eta_0$ be an element satisfying the trace in \eqref{trace}.
By $\gcd(2^s-1,2^k-1)=2^r-1$, the elements
giving the same power value as $\eta_0$
are precisely $\eta_0\mu$ with
$\mu\in\F_{2^r}^*$. For every such $\mu$, we have $\mu^{2^s}=\mu$,
and hence
\[
\Tr_r^k\left(
\delta(\eta_0\mu)^{-2^s}+(\eta_0\mu)^{-1}
\right)
=
\mu^{-1}\Tr_r^k\left(
\delta\eta_0^{-2^s}+\eta_0^{-1}
\right)=0.
\]
Thus, each parameter $\gamma$ counted above
corresponds to exactly $2^r-1$ of the
$2^{k-r}-1$ elements $\eta$ satisfying
\eqref{trace}.
The number of these parameters $\gamma$
is therefore $$\frac{2^{k-r}-1}{2^r-1}.$$

Finally, for every parameter counted above,
$\beta$ is a solution of \eqref{3.61}.
Setting $b=0$ in this equation would give $\bar a\beta^{2^s}+a\beta=0$,
contradicting the assumption of Case 2.
Thus, the corresponding elements $b$ are
all nonzero.
Since $b$ and $\gamma$ are in bijection
for fixed $a$ and $\beta$, we obtain
\[
B_{2^r+1}(a,\beta)
=\frac{2^{k-r}-1}{2^r-1}
=\frac{2^k-2^r}{2^r(2^r-1)}.
\] \end{proof}

\begin{remark}\label{adrem1}
For $i\in\{1,2\}$, define
\[
B_i(a,\beta)
=
\bigl|\{b\in\F_{2^n}^{*}:N(a,b)=i
\text{ and }\beta\text{ is a root of }\eqref{3.61}\}\bigr|.
\]
Proposition~\ref{pro1} also allows us to determine these
two quantities. Indeed, $N(a,b)=1$ or $2$ if and
only if the corresponding equation in $V$ has
no solution or exactly one solution, respectively.
Following the two cases in the proof of
Proposition \ref{lem5}, we obtain
\[
B_1(a,\beta)=
\begin{cases}
2^k-2^{k-r},
& \bar a\beta^{2^s}+a\beta=0,\\
2^{k-r}-1,
& \bar a\beta^{2^s}+a\beta\ne0
\end{cases}
\]
and
\[
B_2(a,\beta)=
\begin{cases}
0,
& \bar a\beta^{2^s}+a\beta=0,\\[1ex]
\displaystyle
2^k+1-\frac{2^k-1}{2^r-1},
& \bar a\beta^{2^s}+a\beta\ne0.
\end{cases}
\]
\end{remark}

By Proposition \ref{lem5} and the double-counting identity \eqref{AB},
to determine $A_{2^r+1}(a)$, it remains to count
the elements $\beta\in S$ satisfying
$\bar a\beta^{2^s}+a\beta=0$.

\begin{thm}\label{prop2}
	Let $A_{2^r+1}(a)$ be defined in \eqref{Ai} and $\alpha$ be a primitive element
of $\F_{2^n}$. For each integer $l$ with $0\le l\le2^r$,
set $a=\alpha^l$. Then
	\begin{equation}\label{3.17}
		A_{2^r+1}(a)=\left\{
		\begin{array}{cc}
			\frac{(2^k-2^r)(2^{2r}-1+2^k-2^r)}
			{2^r(2^{2r}-1)},&l=0,\\
			\frac{(2^k+1)(2^k-2^r)}
			{2^r(2^{2r}-1)},&1\leq l\leq 2^r.
		\end{array}\right.
	\end{equation}
\end{thm}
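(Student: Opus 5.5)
Combining the double-counting identity \eqref{AB} with Proposition~\ref{lem5}, it suffices to count
\[
M(a):=\left|\left\{\beta\in S:\bar a\beta^{2^s}+a\beta=0\right\}\right|.
\]
Once $M(a)$ is known, we get
\[
(2^r+1)A_{2^r+1}(a)=M(a)\frac{2^k-2^r}{2^r}+\bigl(2^k+1-M(a)\bigr)\frac{2^k-2^r}{2^r(2^r-1)},
\]
since $|S|=2^k+1$. We then solve for $A_{2^r+1}(a)$. The whole proof therefore reduces to computing $M(a)$ for $a=\alpha^l$, $0\le l\le 2^r$.

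For $\beta\in S$ the equation $\bar a\beta^{2^s}+a\beta=0$ is equivalent to $\beta^{2^s-1}=a\bar a^{-1}=a^{1-2^k}$. Write $\omega=a^{1-2^k}$. Since $a\mapsto a^{1-2^k}$ maps $\F_{2^n}^*$ onto $S$, we have $\omega\in S$. The map $\beta\mapsto\beta^{2^s-1}$ on the cyclic group $S$ of order $2^k+1$ has kernel of size $g=\gcd(2^s-1,2^k+1)=2^r+1$ by \eqref{3.1}, and its image is the subgroup of $(2^r+1)$-th powers in $S$. Hence $M(a)=2^r+1$ if $\omega$ is a $(2^r+1)$-th power in $S$, and $M(a)=0$ otherwise.

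Next we check this condition for $a=\alpha^l$. Let $\zeta=\alpha^{1-2^k}$, which is a generator of $S$. Then $\omega=\zeta^{l}$, and $\omega$ is a $(2^r+1)$-th power in $S$ exactly when $(2^r+1)\mid l$, using $(2^r+1)\mid(2^k+1)$. For $0\le l\le 2^r$ this happens only for $l=0$. Thus $M(a)=2^r+1$ when $l=0$, and $M(a)=0$ when $1\le l\le 2^r$. (If one prefers to avoid the generator, the same conclusion follows from the index computation $a^{(1-2^k)(2^k+1)/(2^r+1)}=1$ iff $a$ is a $(2^r+1)$-th power in $\F_{2^n}^*$.)

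Finally we substitute. For $1\le l\le 2^r$ this gives $(2^r+1)A=\frac{(2^k+1)(2^k-2^r)}{2^r(2^r-1)}$, which is the stated value. For $l=0$ we get
\[
(2^r+1)A=\frac{2^k-2^r}{2^r(2^r-1)}\Bigl[(2^r+1)(2^r-1)+2^k+1-2^r-1\Bigr],
\]
which simplifies to the stated expression $\frac{(2^k-2^r)(2^{2r}-1+2^k-2^r)}{2^r(2^{2r}-1)}$. The only step that needs real care is the power-residue argument identifying when $\omega$ lies in the image. It relies on $\gcd(2^s-1,2^k+1)=2^r+1$ and on $S$ being cyclic with generator $\alpha^{1-2^k}$. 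The remaining steps are arithmetic.
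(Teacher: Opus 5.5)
Your proof is correct and follows essentially the same route as the paper. You reduce via \eqref{AB} and Proposition~\ref{lem5} to counting $\beta\in S$ with $\beta^{2^s-1}=a^{1-2^k}$, and then use that $S$ is cyclic of order $2^k+1$ with $\gcd(2^s-1,2^k+1)=2^r+1$. The paper writes this last step as the exponent congruence $(2^s-1)j\equiv -l \pmod{2^k+1}$, while you phrase it as the kernel and image of the power map; the final substitution and arithmetic are the same.
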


\begin{proof}
	Since $a\ne0$ and $\beta\in S$, the equality $\bar a\beta^{2^s}+a\beta=0$ is equivalent to \begin{equation}\label{ba}
		\beta^{2^s-1}=a^{1-2^k}.
	\end{equation}
The element $\alpha^{2^k-1}$ generates the cyclic group $S$, so every $\beta\in S$ can be written uniquely as $\beta=(\alpha^{2^k-1})^j$ for some $j\in \{0,1,2,\cdots,2^k\}$. Using $a=\alpha^l$, we obtain
\[
\bar a\beta^{2^s}+a\beta=0
\quad\Longleftrightarrow\quad
(2^s-1)j\equiv-l\pmod{2^k+1}.
\]By $\gcd(2^s-1,2^k+1)=2^r+1$ in \eqref{3.1},
this congruence has a solution if and only if
$2^r+1$ divides $l$. Since $0\le l\le2^r$, this occurs precisely
when $l=0$.
In that case, there are exactly $2^r+1$
solutions modulo $2^k+1$. Hence
$$\big|\{\beta \in S: \bar a\beta^{2^s}+a\beta=0\}\big|=\left\{
	\begin{array}{cc}
		2^r+1, & l=0,\\
		0, & 1\leq l\leq 2^r.
	\end{array}\right.$$
Since $|S|=2^k+1$, the number of elements
$\beta\in S$ satisfying
$\bar a\beta^{2^s}+a\beta\ne0$ is
$2^k-2^r$ when $l=0$ and $2^k+1$
when $1\le l\le2^r$.
	
	Therefore, by Proposition \ref{lem5}, we have
	\begin{align*}
		\sum\limits_{\beta\in S}B_{2^r+1}(a,\beta)
		&=
		\sum_{\substack{\beta\in S\\\bar a\beta^{2^s}+a\beta=0}}B_{2^r+1}(a,\beta)
		+
		\sum_{\substack{\beta\in S\\\bar a\beta^{2^s}+a\beta\ne0}}B_{2^r+1}(a,\beta)
		\nonumber\\
		&=\left\{
		\begin{array}{cc}
			(2^r+1)\frac{2^k-2^r}{2^r}
			+(2^k-2^r)\frac{2^k-2^r}{2^r(2^r-1)},
			& l=0,\\
			(2^k+1)\frac{2^k-2^r}{2^r(2^r-1)},
			& 1\leq l\leq 2^r.
		\end{array}\right.
	\end{align*}
This together with the double-counting identity
\eqref{AB} yields the desired formulas.
	\end{proof}

Let $d$ be defined in \eqref{new-d} and $\alpha$ be a primitive element of $\F_{2^n}$.
For each fixed integer $l$ with $0\le l\le2^r$,
set $a=\alpha^l$.
By \eqref{CC}, we have
\[
C_{d,l}(\tau)=W_d(a,\alpha^\tau)-1.
\]
As $\tau$ ranges over $0\le\tau\le2^n-2$,
the element $\alpha^\tau$ runs through
$\F_{2^n}^*$.
By Proposition~\ref{lem 4} and the definition of $A_i(a)$,
the correlation value $2^k(i-1)-1$ occurs with frequency
$A_i(a)$ for each $i\in\{0,1,2,2^r+1\}$.
Since $A_{2^r+1}(a)$ has been determined in
Theorem~\ref{prop2}, it remains to compute
$A_0(a)$, $A_1(a)$, and $A_2(a)$.

By Remark~\ref{adrem1} and the double-counting identities
\[
iA_i(a)=\sum_{\beta\in S}B_i(a,\beta),
\qquad i=1,2,
\]
we can determine $A_1(a)$ and $A_2(a)$ in the same way
as in the proof of Theorem~\ref{prop2}.
The total frequency identity
\begin{equation}\label{sumeq3}
A_0(a)+A_1(a)+A_2(a)+A_{2^r+1}(a)=2^n-1
\end{equation}
then gives $A_0(a)$, and hence the correlation distribution
for each $l\in\{0,1,\ldots,2^r\}$.

Here we give an alternative derivation using the first
two moments. Together with \eqref{sumeq3}, the following
two moment identities give three linear equations
for the remaining frequencies:
\begin{align}
\sum_{\tau=0}^{2^n-2} C_{d,l}(\tau)
&=
\sum_{i\in\{0,1,2,2^r+1\}}
\bigl(2^k(i-1)-1\bigr)A_i(a),
\label{first-moment1}\\
\sum_{\tau=0}^{2^n-2} C_{d,l}(\tau)^2
&=
\sum_{i\in\{0,1,2,2^r+1\}}
\bigl(2^k(i-1)-1\bigr)^2A_i(a).
\label{first-moment2}
\end{align}

Following the approach in
\cite[Lemma 2.5]{Niho1}, we can evaluate these
two moments in the following proposition.

\begin{prop}\label{Ca}
Let $C_{d,l}(\tau)$ be defined in \eqref{2.3} and $d$ be given in \eqref{new-d}. For each $l$ with $0\le l\le2^r$,
the first two moments are given by
	\begin{enumerate}
		\item[(i)]$
		\sum\limits_{\tau=0}^{2^n-2} C_{d, l}(\tau)=
			\left\{
		\begin{array}{cc}
			-2^{r+k}+1,&l=0,\\
			2^k+1,&1\leq l\leq 2^r;
		\end{array}\right.$
		\item[(ii)]$
		\sum\limits_{\tau=0}^{2^n-2}C_{d, l}^2(\tau)=
		\left\{
		\begin{array}{cc}
			2^{4k}-2^{2k}-2^{2k+2r}+2^{k+r+1}-1,&l=0,\\
			2^{4k}-2^{2k+1}-2^{k+1}-1,&1\leq l\leq 2^r.
		\end{array}\right.
		$
	
	\end{enumerate}
\end{prop}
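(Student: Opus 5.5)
The plan is to express both moments through the single quantity $W_d(a,0)$, evaluate it with Lemma~\ref{lem 2}, and substitute. First, by \eqref{CC}, as $\tau$ runs over $0\le\tau\le 2^n-2$ the element $b=\alpha^\tau$ runs over $\F_{2^n}^*$, so
\[
\sum_{\tau}C_{d,l}(\tau)=\sum_{b\in\F_{2^n}^*}W_d(a,b)-(2^n-1),
\qquad
\sum_{\tau}C_{d,l}^2(\tau)=\sum_{b\in\F_{2^n}^*}W_d(a,b)^2-2\sum_{b\in\F_{2^n}^*}W_d(a,b)+(2^n-1).
\]

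\textbf{Full sums over $b$.} Summing over all of $\F_{2^n}$ and using orthogonality of additive characters:
\begin{itemize}
\item In $\sum_{b\in\F_{2^n}}W_d(a,b)=\sum_x(-1)^{\Tr_1^n(ax^d)}\sum_b(-1)^{\Tr_1^n(bx)}$, only $x=0$ survives, so the sum equals $2^n$.
\item In $\sum_{b\in\F_{2^n}}W_d(a,b)^2=\sum_{x,y}(-1)^{\Tr_1^n(a(x^d+y^d))}\sum_b(-1)^{\Tr_1^n(b(x+y))}$, only $y=x$ survives. The exponent then vanishes, so the sum equals $2^n\cdot 2^n=2^{4k}$.
\end{itemize}
Removing the term $b=0$ and writing $W_0:=W_d(a,0)$, we get
\[
\sum_\tau C_{d,l}(\tau)=1-W_0,
\qquad
\sum_\tau C_{d,l}^2(\tau)=2^{4k}-W_0^2-2(2^{2k}-W_0)+2^{2k}-1 = 2^{4k}-2^{2k}-1-W_0^2+2W_0 .
\]

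\textbf{Evaluating $W_0$.} Set $N=2^r+1$. By Lemma~\ref{lem3}, $\gcd(d,2^n-1)=N$, so $x\mapsto x^d$ maps $\F_{2^n}^*$ onto the subgroup of $N$-th powers, and each image is hit exactly $N$ times. The map $x\mapsto x^N$ has the same property, so
\[
W_0=\sum_x(-1)^{\Tr_1^n(ax^N)}.
\]
Now apply Lemma~\ref{lem 2}. Its hypotheses hold:
\begin{itemize}
\item $N\ge 3$;
\item $N\mid 2^n-1$, since $N\mid 2^k+1$ by Lemma~\ref{lem 3} and the oddness of $k/r$;
\item the least $j$ with $N\mid 2^j+1$ is $j=r$.
\end{itemize}
Because $k/r$ is odd, the sign $(-1)^{k/r+1}$ equals $1$. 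Moreover, $a=\alpha^l$ with $0\le l\le 2^r$ is an $N$-th power exactly when $N\mid l$, i.e.\ when $l=0$. Hence $W_0=2^r\cdot 2^k=2^{k+r}$ for $l=0$ and $W_0=-2^k$ for $1\le l\le 2^r$.

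\textbf{Conclusion.} Substituting $W_0$ into the two displayed formulas gives exactly the values in (i) and (ii). The only step needing care is justifying the replacement of $x^d$ by $x^N$ and checking the hypotheses of Lemma~\ref{lem 2} (in particular $j=r$ and the sign). Everything else is routine character-sum orthogonality, so I expect no real obstacle.
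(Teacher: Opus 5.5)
Your proof is correct and follows essentially the same route as the paper: both reduce the two moments to the single sum $W_d(a,0)=\sum_{x\in\F_{2^n}}(-1)^{\Tr_1^n(ax^d)}$, replace $x^d$ by $x^{2^r+1}$ using $\gcd(d,2^n-1)=2^r+1$, and evaluate it with Lemma~\ref{lem 2} (with $j=r$ and $k/r$ odd). The only difference is bookkeeping: you apply orthogonality over all $b\in\F_{2^n}$ and then subtract the $b=0$ term, whereas the paper sums over $b\in\F_{2^n}^*$ directly. Both give the second moment in the form $2^{4k}-2^{2k}-(W_d(a,0)-1)^2$.
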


\begin{proof} (i) Put $a=\alpha^l$ and $b=\alpha^{\tau}$. As $\tau$ ranges over
$\{0,1,\ldots,2^n-2\}$, the element $b=\alpha^\tau$
runs through $\F_{2^n}^{*}$.
By \eqref{2.3}, we have
	\begin{equation}\label{powerad2}
	\begin{aligned}
		\sum_{\tau=0}^{2^n-2} C_{d, l}(\tau)
		&=\sum_{b\in\mathbb{F}_{2^n}^*}\sum_{x\in\mathbb{F}_{2^n}^*}(-1)^{\operatorname{Tr}_1^n(ax^d+bx)}\\
		&=\sum_{x\in\mathbb{F}_{2^n}^*}(-1)^{\operatorname{Tr}_1^n(ax^d)}\sum_{b\in\mathbb{F}_{2^n}^*}(-1)^{\operatorname{Tr}_1^n(bx)}\\
		&=\sum_{x\in\mathbb{F}_{2^n}^*}(-1)^{\operatorname{Tr}_1^n(ax^d)}\cdot (-1)\\
		&=-\sum_{x\in\mathbb{F}_{2^n}^*}(-1)^{\operatorname{Tr}_1^n(ax^d)}.
	\end{aligned}
	\end{equation}
Since $\gcd(d,2^n-1)=2^r+1$, the cyclic subgroups
$\langle\alpha^d\rangle$ and
$\langle\alpha^{2^r+1}\rangle$ both have order
$(2^n-1)/(2^r+1)$ and hence coincide.
As $i$ ranges over $\{0,1,\ldots,2^n-2\}$, both
$\alpha^{id}$ and $\alpha^{i(2^r+1)}$ run through
each element of this common subgroup exactly
$2^r+1$ times. Therefore,
\begin{equation}\label{powerad1}
\begin{aligned}
\sum_{x\in\F_{2^n}^{*}}(-1)^{\Tr_1^n(ax^d)}
&=\sum_{i=0}^{2^n-2}
  (-1)^{\Tr_1^n(a\alpha^{id})}\\
&=\sum_{i=0}^{2^n-2}
  (-1)^{\Tr_1^n(a\alpha^{i(2^r+1)})}\\
&=\sum_{x\in\F_{2^n}^{*}}
  (-1)^{\Tr_1^n(ax^{2^r+1})}.
\end{aligned}
\end{equation}

Recall that $r\mid k$ and $k/r$ is odd.
The least positive integer $j$ satisfying
$2^r+1\mid 2^j+1$ is $j=r$. By Lemma \ref{lem 2},
\begin{equation}\label{power}
\sum_{x\in\mathbb F_{2^n}}(-1)^{\mathrm{Tr}_1^n(ax^{2^r+1})}=
\begin{cases}
	2^{r+k}, & \text{if } a \text{ is a }(2^r+1)\text{-th power in } \mathbb{F}_{2^n}^{*},\\
	-2^{k}, & \text{if } a \text{ is not a }(2^r+1)\text{-th power in } \mathbb{F}_{2^n}^{*}.
\end{cases}
\end{equation}
Since $a=\alpha^l$ with $0\le l\le 2^r$,
we have $a\in\langle\alpha^{2^r+1}\rangle$
if and only if $l=0$. Removing the term corresponding
to $x=0$ and combining
\eqref{powerad2}, \eqref{powerad1}, and \eqref{power}
yield the desired formula in (i).

 (ii) Expanding the square, we have
\begin{align*}
	\sum_{\tau=0}^{2^n-2} C_{d, l}^2(\tau)
	&=\sum_{b\in\mathbb{F}_{2^n}^*}\sum_{x_1\in\mathbb{F}_{2^n}^*}(-1)^{\mathrm{Tr}_1^n(ax_1^d+bx_1)}\sum_{x_2\in\mathbb{F}_{2^n}^*}(-1)^{\mathrm{Tr}_1^n(ax_2^d+bx_2)}\\
	&= \sum_{x_1\in\mathbb{F}_{2^n}^*}\sum_{x_2\in\mathbb{F}_{2^n}^*}
	(-1)^{\mathrm{Tr}_1^n\bigl(a\left(x_1^d+x_2^d\right)\bigr)}
	\sum_{b\in\mathbb{F}_{2^n}^*}
	(-1)^{\mathrm{Tr}_1^n(b(x_1+x_2))}.
\end{align*}

Since \[\sum\limits_{b\in\mathbb{F}_{2^n}^*}(-1)^{\mathrm{Tr}_1^n(b(x_1+x_2))}=\left\{\begin{array}{cc}2^n-1, &  x_1=x_2,\\-1,& x_1\ne x_2,\end{array}\right.\]
 and $x_1^d+x_2^d=0$ when $x_1=x_2$, it follows that
\[
\begin{aligned}
\sum_{\tau=0}^{2^n-2}C_{d,l}(\tau)^2
&=(2^n-1)^2
  -\sum_{\substack{x_1,x_2\in\F_{2^n}^{*}\\x_1\ne x_2}}
  (-1)^{\Tr_1^n\left(a\left(x_1^d+x_2^d\right)\right)}\\
&=(2^n-1)^2+(2^n-1)
  -\sum_{x_1,x_2\in\F_{2^n}^{*}}
  (-1)^{\Tr_1^n\left(a\left(x_1^d+x_2^d\right)\right)}\\
&=2^n(2^n-1)
  -\left(
    \sum_{x\in\F_{2^n}^{*}}
    (-1)^{\Tr_1^n(ax^d)}
   \right)^2.
\end{aligned}
\]

By the evaluation in (i),
\[
\sum_{x\in\F_{2^n}^{*}}(-1)^{\Tr_1^n(ax^d)}
=
\begin{cases}
2^{k+r}-1, & l=0,\\
-2^k-1, & 1\le l\le 2^r.
\end{cases}
\]

Substituting these values and using $n=2k$, we obtain
the formula in (ii).
\end{proof}

With the above preparations, we can compute the cross-correlation distribution between $\{u(t)\}$ and its $d$-decimated sequences $\{v_l(t)\}$ for each $l\in \{0,1,2,\cdots,2^r\}$.

\begin{thm}\label{thm} Let $n=2k$, $d$ be given in \eqref{new-d}, and $r=\gcd(s,k)$. Then, for each fixed integer $l$ in $\{0,1,2,\cdots,2^r\}$, the value distribution of $C_{d,l}(\tau)$ as
$\tau$ ranges over $\{0,1,\ldots,2^n-2\}$ is given
as follows.

	(i) If $l=0$, then
	\[
	\left\{
	\begin{array}{cc}
		-1-2^k, & \frac{2^r\left(2^{2k}-2^{k+r}+2^k-2^r\right)}{2(2^r+1)} \text{ times},\\[6pt]
		-1, & 2^{2k-r}+2^{k+r}-2^{k+1}-2^{k-r}+2^r \text{ times},\\[6pt]
		-1+2^k, & \frac{(2^{k}-2^{r})(2^{k+r}-2^{k+1}+2^{r})}{2(2^r-1)} \text{ times},\\[10pt]
		-1+2^{k+r}, & \frac{(2^k-2^r)\left(2^k-2^r+2^{2r}-1\right)}{2^r(2^{2r}-1)} \text{ times}.
	\end{array}\right.
	\]
	
	(ii) For each integer $l$ with $1\le l\le 2^r$, the distribution is given by
	\[
	C_{d, l}(\tau)=
	\left\{
	\begin{array}{cc}
		-1-2^k, & \frac{2^{2k+r}-2^{k+1}-2^r-2}{2(2^r+1)} \text{ times},\\[6pt]
		-1, & 2^{2k-r}-2^k+2^{k-r}-1 \text{ times},\\[6pt]
		-1+2^k, & \frac{(2^k+1)\left(2^{k+r}-2^{k+1}+2^r\right)}{2(2^r-1)} \text{ times},\\[10pt]
		-1+2^{k+r}, & \frac{(2^k+1)(2^k-2^r)}{2^r(2^{2r}-1)} \text{ times}.
	\end{array}\right.
	\]
\end{thm}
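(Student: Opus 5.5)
The plan is to solve a $3\times 3$ linear system for the three unknown frequencies, treating $A_{2^r+1}(a)$ from Theorem~\ref{prop2} as known. By Proposition~\ref{lem 4} and \eqref{CC}, $C_{d,l}(\tau)=2^k(N(a,b)-1)-1$, so the four values are $-1-2^k,-1,-1+2^k,-1+2^{k+r}$ with frequencies $A_0(a),A_1(a),A_2(a),A_{2^r+1}(a)$. The three equations are the total count \eqref{sumeq3} and the moment identities \eqref{first-moment1}, \eqref{first-moment2}, with left-hand sides from Proposition~\ref{Ca}. Since the three remaining values are distinct, the Vandermonde-type system is nonsingular, so the solution is unique.

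To simplify, I will shift to $W=C+1=2^k(i-1)$. Write $T=2^n-1$ and set
\[
M_1=\sum_\tau (C_{d,l}(\tau)+1), \qquad M_2=\sum_\tau (C_{d,l}(\tau)+1)^2 .
\]
These follow from Proposition~\ref{Ca} via $M_1=\sum C+T$ and $M_2=\sum C^2+2\sum C+T$. Dividing by powers of $2^k$ gives
\begin{align*}
A_0+A_1+A_2 &= T-A_{2^r+1},\\
-A_0+A_2 &= M_1/2^k-2^rA_{2^r+1},\\
A_0+A_2 &= M_2/2^{2k}-2^{2r}A_{2^r+1}.
\end{align*}
Adding and subtracting the last two equations gives $A_2$ and $A_0$, and then the first equation gives $A_1$.

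For $l=0$, I expect $M_1=2^{2k}-2^{k+r}$ and $M_2=2^{4k}-2^{2k+2r}$, i.e.\ $2^n(2^n-1)-(2^{k+r}-1)^2+2(-2^{k+r}+1)+2^{2k}-1$. For $1\le l\le 2^r$, I expect $M_1=2^{2k}+2^k$ and $M_2=2^{4k}-2^{2k}$. I will then substitute the two cases of \eqref{3.17} and simplify, writing $2^k-2^r=2^r(2^{k-r}-1)$ where helpful, to match the displayed expressions.

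As a cross-check I can use Remark~\ref{adrem1} with $iA_i=\sum_\beta B_i(a,\beta)$. For $l\ge1$ this gives $A_1=(2^k+1)(2^{k-r}-1)=2^{2k-r}-2^k+2^{k-r}-1$, which agrees with the claimed frequency of $-1$.

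The main obstacle is purely the algebraic simplification, chiefly $A_0$ and $A_2$ for $l=0$, where the denominators $2^r(2^{2r}-1)$ from \eqref{3.17} must combine into the stated forms with denominators $2(2^r\pm1)$. I would do this by clearing denominators and comparing polynomials in $X=2^k$ and $Y=2^r$, checking the result against small parameters such as $k=3$, $r=1$ as a sanity test.
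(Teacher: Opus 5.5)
Your proposal is correct and follows the paper's own proof: both solve the linear system formed by the total-frequency identity \eqref{sumeq3}, the two moment identities with the values from Proposition~\ref{Ca}, and the value of $A_{2^r+1}(a)$ from Theorem~\ref{prop2}. Your shift to $W=C+1$ is only a convenient way to do the same elimination, since it removes $A_1$ from the moment equations. Your values $M_1=2^{2k}-2^{k+r}$, $M_2=2^{4k}-2^{2k+2r}$ for $l=0$, and $M_1=2^{2k}+2^k$, $M_2=2^{4k}-2^{2k}$ for $1\le l\le 2^r$, are right, and solving with them does give exactly the stated frequencies.
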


\begin{proof}
For each fixed $l\in\{0,1,\ldots,2^r\}$, combining
the identities \eqref{sumeq3}--\eqref{first-moment2}
with the moment formulas in Proposition~\ref{Ca}
and the expression for $A_{2^r+1}(\alpha^l)$ in
Theorem~\ref{prop2} gives the following systems.

(i) If $l=0$, then
\[
	\left\{
	\begin{aligned}
		&A_0(a)+A_1(a)+A_2(a)+A_{2^r+1}(a)=2^{n}-1,\\
		&(-1-2^k)A_0(a)-A_1(a)+(-1+2^k)A_2(a)
		+(-1+2^{k+r})A_{2^r+1}(a)=-2^{r+k}+1,\\
		&(-1-2^k)^2A_0(a)+A_1(a)+(-1+2^k)^2A_2(a) +(-1+2^{k+r})^2A_{2^r+1}(a)\\
		&\qquad\,\,\,\,\,\,\,\,\,\,\,\,\,\,\,\,\,\,\,\,\,\,\,\,\,\,\,\,\,\,\,\,\,\,\,\,\,\,\,\,\,\,\,\,\,\,\,\,\,\,\,\,\,\,\,\,\,\,\,\,\,\,\,\,\,\,\,\,\,\,\,\,\,\,\,\,\,\,\,\,\,\,\,\,\,\,\,\,\,\,\,\,\,\,\,\,\,\,\,\,\,\,\, =2^{4k}-2^{2k}-2^{2k+2r}+2^{k+r+1}-1,\\
		&A_{2^r+1}(a)=\frac{(2^k-2^r)(2^k-2^r+2^{2r}-1)}{2^r(2^{2r}-1)}.
	\end{aligned}
	\right.
	\]
Solving this system yields the frequencies in (i).

(ii) If $1\le l\le 2^r$, then
\[
	\left\{
	\begin{aligned}
		&A_0(a)+A_1(a)+A_2(a)+A_{2^r+1}(a)=2^{n}-1,\\
		&(-1-2^k)A_0(a)-A_1(a)+(-1+2^k)A_2(a)
		+(-1+2^{k+r})A_{2^r+1}(a)=2^k+1,\\
		&(-1-2^k)^2A_0(a)+A_1(a)+(-1+2^k)^2A_2(a) +(-1+2^{k+r})^2A_{2^r+1}(a)\\
		&\qquad\,\,\,\,\,\,\,\,\,\,\,\,\,\,\,\,\,\,\,\,\,\,\,\,\,\,\,\,\,\,\,\,\,\,\,\,\,\,\,\,\,\,\,\,\,\,\,\,\,\,\,\,\,\,\,\,\,\,\,\,\,\,\,\,\,\,
\,\,\,\,\,\,\,\,\,\,\,\,\,\,\,\,\,\,\,\,\,\,\,\,\,\,\,\,\,\,\,\,\,\,\,\,
\,\,\,\,\,\,\,\,\,\,\,\,\, =2^{4k}-2^{2k+1}-2^{k+1}-1,\\
		&A_{2^r+1}(a)=\frac{(2^k+1)(2^k-2^r)}
		{2^r(2^{2r}-1)}.
	\end{aligned}
	\right.
\]
Solving this system gives the frequencies in (ii).
\end{proof}

The following numerical examples verify the distributions
given by Theorem \ref{thm}.

\begin{example}
For the parameter pairs $(s,k)=(2,3)$, $(4,6)$, and $(6,9)$
in \eqref{new-d}, the values of $C_{d,l}(\tau)$
and their frequencies computed using Magma are listed in Table~\ref{exam01}.
They are consistent with the results in Theorem \ref{thm}.
\end{example}

\begin{table}[H]
	\caption{Some numerical results about the correlation distributions}\vspace{0.2cm}
	\label{exam01}
	\centering
	\small
	\setlength{\tabcolsep}{5.5pt}
	\renewcommand{\arraystretch}{1.08}
	\begin{tabular}{ccccccc}
		\toprule
		\multirow{2}{*}{\((s,k)\)}
		& \multirow{2}{*}{\(d\)}
		& \multirow{2}{*}{\(r\)}
		& \multirow{2}{*}{\(2^r+1\)}
		& \multirow{2}{*}{\(\begin{gathered}
				\text{Cross-correlation}\\[-2pt]
				\text{value}\,\,\, C_{d,l}
			\end{gathered}\)}
		& \multicolumn{2}{c}{Frequency}
		\\
		\cmidrule(lr){6-7}
		&&&&
		& \(\begin{gathered}
						l=0
		\end{gathered}\)
		& \(\begin{gathered}
						l\in\{1,2,\cdots,2^r\}
		\end{gathered}\)
		\\
		\midrule
		
		\multirow{4}{*}{\((2,3)\)}
		& \multirow{4}{*}{\(57\)}
		& \multirow{4}{*}{\(1\)}
		& \multirow{4}{*}{\(3\)}
		& \(-9\) & \(18\) & \(18\) \\
		&&&& \(-1\) & \(30\) & \(27\) \\
		&&&& \(7\)  & \(6\)  & \(9\)  \\
		&&&& \(15\) & \(9\)  & \(9\)  \\
		\midrule
		
		\multirow{4}{*}{\((4,6)\)}
		& \multirow{4}{*}{\(2710\)}
		& \multirow{4}{*}{\(2\)}
		& \multirow{4}{*}{\(5\)}
		& \(-65\) & \(1560\) & \(1625\) \\
		&&&& \(-1\)  & \(1140\) & \(975\)  \\
		&&&& \(63\)  & \(1320\) & \(1430\) \\
		&&&& \(255\) & \(75\)   & \(65\)   \\
		\midrule
		
		\multirow{4}{*}{\((6,9)\)}
		& \multirow{4}{*}{\(149724\)}
		& \multirow{4}{*}{\(3\)}
		& \multirow{4}{*}{\(9\)}
		& \(-513\) & \(114912\) & \(116451\) \\
		&&&& \(-1\)   & \(35784\)  & \(32319\)  \\
		&&&& \(511\)  & \(110880\) & \(112860\) \\
		&&&& \(4095\) & \(567\)    & \(513\)    \\
		\bottomrule
	\end{tabular}

\end{table}

\begin{remark}
For each fixed $k$, Theorem~\ref{thm} treats the two
noncoprime branches of \eqref{d12} in the unified form
\eqref{new-d}, with $1<s<2k$ and $v_2(s)>v_2(k)$. Since $r<k$ and $\frac{k}{r}$ is odd, we have $k\ge3r$.
The four frequencies in each case are therefore
positive, so $C_{d,l}(\tau)$ takes exactly four values
for every $0\le l\le2^r$.
\end{remark}

\begin{remark}
Theorem~\ref{thm} includes the binary cross-correlation
functions studied in \cite{Fan1,Luo1,Xia2} as special cases.
As noted after \eqref{forcomp}, the corresponding
decimations are obtained from \eqref{new-d} by taking
$s=k+t$, $s=2k-2$, and $s=k+1$, respectively.
Under the parameter conditions in these references,
$\gcd(s,k)=1$ and $\gcd(d,2^n-1)=3$.
Thus, these results correspond to the case $r=1$
of Theorem~\ref{thm}.
\end{remark}

We conclude this section by comparing our problem
and method with those in the existing literature.
A standard method for determining a correlation
distribution with known possible values is to use
moment identities to obtain linear equations
in the unknown frequencies.
When $\gcd(d,2^n-1)=1$, this method is effective
if $C_{d,0}(\tau)$ takes few values; see
\cite{Helle1,HelleOpen1,Helle3,Helle4,Ranto1,Rose1}.

When $\gcd(d,2^n-1)>1$, the first two moments of
$C_{d,l}(\tau)$ can be evaluated once $W_d(\alpha^l,0)$
is known, but higher moments generally require
additional counting arguments.
The moment identities for $W_d(a,b)$ remain valid,
but they involve the frequencies from all the
distributions of $W_d(\alpha^l,b)$
(equivalently, $C_{d,l}(\tau)$),
$0\le l<\gcd(d,2^n-1)$, and do not, in general,
determine each distribution separately.
Thus, even when the overall distribution of $W_d(a,b)$
is known, additional information is generally needed
to determine the correlation distribution
for each fixed $l$.

In \cite{Fan1,Luo1,Xia2}, the value distribution of
$W_d(a,b)$ is first determined using moment identities up to
order three, and additional information is used to
separate the distributions for $l=0,1,2$.
In \cite{Fan1,Luo1}, where $n=2m$, special choices
of the coefficients give lower bounds for the frequency
of the correlation value $2^m-1$ in each of the two
distribution cases. A relation from the overall
distribution forces both bounds to be attained.
In \cite{Xia2}, a one-to-one correspondence between
a repeated root of an associated cubic and its
coefficient directly gives the frequency of the
correlation value $q-1$ for $l=0$, where $q=p^m$.
The distribution for $l=0$ then follows from the
total frequency and the first two moments, while
those for $l=1,2$ are recovered from the overall
distribution. Thus, all three methods rely on
the overall distribution, but obtain the additional
frequency information in different ways.

Our method differs from those in \cite{Fan1,Luo1,Xia2}
in how the frequency $A_{2^r+1}(a)$ is determined. We determine $A_{2^r+1}(a)$ directly
for each fixed $a=\alpha^l$ by relating the coefficient $b$ to a root $\beta\in S$ of the equation \eqref{3.61} and counting the admissible
pairs $(b,\beta)$ in two ways. The remaining frequencies follow from the total
frequency and the first two moments.
Thus, our method requires neither the overall
distribution of $W_d(a,b)$ nor a third-moment
calculation. It would be interesting to explore whether this
counting method can be applied to other correlation
distribution problems, such as those studied
in \cite{XiaOpen1,XiaNiho1}.

\section{Conclusion}\label{con}

In this paper, we completely determine the cross-correlation
distribution between a binary $m$-sequence of period $2^n-1$
and each of its $d$-decimation sequences, where $d$ is defined
in \eqref{new-d}. There are exactly $2^r+1$ such decimation
sequences, each of period $(2^n-1)/(2^r+1)$, where
$r=\gcd(s,k)$. For each $l\in\{0,1,\ldots,2^r\}$,
the cross-correlation function $C_{d,l}(\tau)$ is four-valued.
The corresponding distributions fall into two classes:
one for $l=0$ and the other for $1\le l\le 2^r$. Our results include those in \cite{Fan1,Luo1,Xia2}
as special cases.

The key step is the direct evaluation of $A_{2^r+1}(a)$
for each fixed $a=\alpha^l$.
The main idea is as follows: we fix a root $\beta\in S$
of \eqref{3.61} and reduce the remaining root-counting
problem to counting solutions of an associated
linearized equation in the trace-one set $V$.
By relating the coefficient $b$ to the root $\beta$,
we establish a double-counting identity for the
admissible pairs $(b,\beta)$ and use it to determine
$A_{2^r+1}(a)$.
The remaining frequencies are then obtained from
the total frequency and the first two moments.
In fact, the same idea yields $A_1(a)$
and $A_2(a)$ directly, while $A_0(a)$ follows
from the total frequency.
Thus, the correlation distributions considered
in this paper can also be determined without
using moment identities.
This counting method offers a new approach to
similar correlation distribution problems.
\section*{Acknowledgement}
{The work of X. Zeng was supported by the National Natural Science Foundation of China under Grant  12471492.}

\end{document}